\DeclareMathOperator{\diag}{diag}
\DeclareMathOperator{\id}{id}
\DeclareMathOperator{\im}{Im}
\DeclareMathOperator{\lcm}{LCM}
\newtheorem{thm}{Theorem}[section]
\newtheorem{dfn}{Definition}[section]
\newtheorem{prop}{Proposition}[section]
\newtheorem{asmp}{Assumption}[section]
\newtheorem{lem}{Lemma}[section]
\title{On the Rationality and the Code Structure of a Narain CFT, and the Simple Current Orbifold}
\author{Yuma Furuta\thanks{Reserch Institute for Mathematical Sciences, Kyoto Univercity, 
Kyoto, Japan}
\thanks{Email: yfuruta@kurims.kyoto-u.ac.jp}}
\date{\today}
\begin{document}
\maketitle

\begin{abstract}

    In this paper, we discuss the simple current orbifold of a rational Narain CFT 
    (Narain RCFT). This is a method of constructing other rational CFTs from a given 
    rational CFT, by ``orbifolding'' the global symmetry formed by a particular primary 
    fields (called the simple current). Our main result is that a Narain RCFT satisfying 
    certain conditions can be described in the form of a simple current orbifold of 
    another Narain RCFT, and we have shown how the discrete torsion in taking that 
    orbifold is obtained. 
    Additionally, 
    the partition function can be considered a simple current orbifold with discrete 
    torsion, which is determined by the lattice and the B-field. We establish that the 
    partition function can be expressed as a polynomial, with the variables substituted 
    by certain $q$-series. In a specific scenario, this polynomial corresponds to the 
    weight enumerator polynomial of an error-correcting code. 
    Using this correspondence to the code theory, we can relate the B-field, the 
    discrete torsion, and the B-form to each other.

\end{abstract}

\newpage
\tableofcontents

\newpage
\section{Introduction}

The holographic correspondence between the three-dimensional 
TQFT and the two-dimensional RCFT has produced some very fascinating studies, 
both mathematically and physically. In particular, the TQFT behaves like a topological 
sector of three-dimensional gravity and can be a convenient touchstone in the 
construction of a quantum gravity theory. Such correspondence has been investigated 
mathematically, and a representative example is the well-known construction of RCFT 
on the boundary for TQFTs with 3-dimensional defects, as discussed in \cite{fuchs2002,fuchs2004}. 
Such a story translates on the 2D side as a simple current orbifold. This orbifold 
is a method proposed by Kreuzer and Schellekens to construct another RCFT from the RCFT, 
and it constructs a modular invariant partition function by specifying the gluing 
of the irreducible representations of the right and left chiral algebra. However, 
as we will see in section \ref{sec2.3}, 
this orbifold is not uniquely defined, but has a degree of freedom called 
a discrete torsion. The reason why it is called orbifold is that it is an 
operation that ``orbifolds'' the global symmetry formed by a particular primary fields 
(called simple current).

In this paper, we discuss the physical meaning of the discrete torsion and 
its counterpart in code theory, as described below. In doing so, we focus on CFTs 
created from Lorentzian even self-dual lattices, i.e., Narain CFTs. First, regarding 
the simple current, as pointed out in \cite{frohlich2007}, it is known that any RCFT can be obtained as 
a simple current orbifold of a diagonal modular invariant RCFT. Here, diagonal modular 
invariant is a partition function of RCFTs consisting of the same representations of 
the right and left chiral algebra. Since such an RCFT is known to be the case of Narain 
CFT with zero antisymmetric B-field, it is expected that the nontrivial simple current 
orbifold discrete torsion has some relation to the nonzero B-field. However, 
no clear assertion of such a relationship is known yet. 
An explicit relation between these two objects can provide a physical interpretation 
of the simple current orbifold, and it is a very interesting question how this can be 
expressed in terms of a 3-dimensional defect line. We have shown that the Narain RCFT 
can be regarded as a simple current orbifold of a diagonal modular invariant when 
certain conditions are satisfied, and we have succeeded in expressing the discrete 
torsion in terms of the coordinates ($\Lambda, B$) 
of the Narain moduli in concrete terms.

In conjunction with these results, we also found an analogue of this discrete 
torsion to the error-correcting code, which is simply the B-form of the code. In short, 
it is the B-form of the code, and as will be discussed in detail in Section \ref{sec2.2}, such a 
correspondence provides a code-theoretic translation of the simple current orbifold. 
An interesting aspect of this finding is its connection with our previous paper\cite{furuta2022}. 
We previously found that the CFT spectral gap is in a sense equivalent to the EPC distance when 
translated into code theory. This distance is a known quantity in combinatorics and 
can be calculated unambiguously, and the necessary conditions for the spectral gap 
to be large can be expressed in a different expression in a straightforward 
manner. In fact, since this EPC distance can be calculated from the B-form of a 
binary code, it is expected that this object retains some information on the CFT 
spectrum. Therefore, if we generalize this correspondence and compare it 
with our present results, we may be able to extract information on the spectral 
gap of the Narain RCFT from the discrete torsion. In this sense, our results are 
very interesting.

In particular, we focus on the relationship between the weight enumerator polynomial 
and the partition function. In other words, we would like to ask how the partition 
function can be reconstructed for a given Narain CFT.
The motivation behind investigating such phenomena lies in the holographic 
correspondence between 3-dimensional topological theories and 2-dimensional RCFTs. 
Recently, spurred by the holographic correspondence between 2-dimensional JT gravity 
theories and 1-dimensional CFTs, the relationship between d-dimensional gravity theories 
and theories obtained by taking ensemble averages of $(d-1)$-dimensional CFTs has been 
explored. In particular, 3-dimensional quantum gravity theories are believed to be 
topological theories, and connections between ensemble averages of 3-dimensional 
Chern-Simons theories and 2-dimensional Narain CFTs were discovered by 
Maloney and Witten, among others \cite{maloney2020,afkhami-jeddi2021,ashwinkumar2021}.


In particular, although the averaging for rational theories satisfying certain 
properties has been addressed in various papers \cite{meruliya2021,aharony2023}, the summation for all rational 
Narain CFTs has not been computed. Therefore, this study aims to express the 
partition function of rational Narain CFTs as a simple polynomial of combinatorial 
objects, execute the averaging for all rational theories by replacing it with the 
combinatorial summation of polynomials, and determine their holographic behavior. 
In this paper, we demonstrate that the partition function of rational Narain CFTs 
satisfying specific conditions can be expressed as a ``weight enumerator polynomial," which 
is an extension of error-correcting codes. By unveiling the presence of ``discrete 
mathematical" objects behind rational theories, we aim to achieve a clearer 
understanding of holographic correspondences.

\section{Narain CFT and Simple Current Orbifold}
In this section we briefly review some of the concepts necessary to
describe our results.
\subsection{Definition of a Narain CFT}\label{sec2.1}
First, we introduce the Narain CFT. This is the CFT that arises when $n$ 
bosons are compactified into an $n$-dimensional torus, but there is also an 
antisymmetric B-field in the background.
The action of this CFT is given by
\begin{equation}
    \mathcal{S}=\frac{1}{4\pi\kappa}\iint\dd{\sigma}\dd{\tau}\left(
        h^{\alpha\beta}\partial_{\alpha}X^{I}\partial_{\beta}X_{I}
    +2B_{IJ}\dot{X}^I X'^J
    \right),
\end{equation}
where the bosons $X(\tau,\sigma)$ are compactified as
\begin{equation}
    X \sim X+2\pi e ,\ e\in \Gamma. 
\end{equation}
Here $\Gamma$ is the lattice that specifies the torus and has generator matrix $\gamma$
, and $h^{\alpha\beta}=\mqty(\dmat{1,-1}),\ I=0,\dots,n-1$.
Furthermore, the momentum vectors $(\vec{p}_L,\vec{p}_R)$ can be written 
explicitly using B and $\gamma$ as
\begin{equation}
    \begin{split}
        \vec{p}_L&=\frac{2\vec{P}+(B+I)\vec{e}}{2}\\
        \vec{p}_R&=\frac{2\vec{P}+(B-I)\vec{e}}{2},
    \end{split}
\end{equation}
for $\vec{P}\in \Gamma^{\ast}$. 
Therefore, $(\vec{p}_L,\vec{p}_R)^{\top}$ form an even self-dual lattice with Lorentzian
metric 
$g=\begin{pmatrix*}
    I & 0 \\ 0 & -I
\end{pmatrix*}$ that has the generator matrix of form 
\begin{equation}\label{eq2.4}
    \Lambda = \begin{pmatrix}
        \gamma^{\ast} & \frac{I+B}{2}\gamma \\
        \gamma^{\ast} & \frac{-I+B}{2}\gamma 
    \end{pmatrix}.
\end{equation}

Since all the vertex primary fields are of the form 
$
    V_{\vec{p}_L,\vec{p}_R}=:e^{i\vec{p}_L\cdot X_L+i\vec{p}_R\cdot X_R}:
$
, the partition function of a Narain CFT is 
\begin{equation}\label{eq2.5}
    \mathcal{Z}(\tau ,\bar{\tau})=\frac{1}{|\eta|^{2n}}\sum_{(\vec{p}_L,\vec{p}_R)\in\Lambda}
    q^{\frac{|\vec{p}_L|^2}{2}}\bar{q}^{\frac{|\vec{p}_R|^2}{2}}
    ,\ q=e^{2\pi i\tau},\ \bar{q}=e^{2\pi i\bar{\tau}}.
\end{equation}
with $\eta=q^{\frac{1}{24}}\prod_{i=1}^{\infty}(1-q^i)$.

Remark that there is a kind of dualities that connects physically equivalent
theories, named T-duality.
This duality consists of left $O(n)\times O(n)$ actions and right $O(n,n;\mathbb{Z})$
actions, i.e.
\begin{eqnarray}
    \Lambda \sim O_1 \Lambda ,\ O_1 \in O(n)\times O(n)\\
    \Lambda \sim \Lambda O_2 ,\ O_2 \in O(n,n;\mathbb{Z}).
\end{eqnarray}
Therefore the moduli space of Narain CFTs with $c=n$ is 
\begin{equation}
    \mathcal{M}_n = O(n)\times O(n)\backslash O(n,n)/O(n,n;\mathbb{Z}).
\end{equation}

\subsubsection{Rationality of a Narain CFT}\label{sec2.1.1}
We would like to comment here on the rationality of the Narain CFT, 
which is very important to our story.
Put simply, a CFT is rational if the Hilbert space of the theory is equal to the direct 
sum of a finite number of irreducible characters of the current algebra.
In particular, many code CFTs are rational. As mentioned above, this is a mathematically 
and physically well-behaved class of theories, and the necessary and sufficient 
conditions for Narain CFTs to be rational have already been investigated by Wendland\cite{wendland2000moduli}.

\begin{thm}[A part of Theorem 4.5.2 of \cite{wendland2000moduli}]\label{thm2.1}
Let $\mathcal{C}(\Lambda,B)$ denote a Narain CFT with central charge $c=d$.
Then $\mathcal{C}$ is rational if and only if $G\coloneqq\Lambda^{\top}\Lambda\in\text{GL}(d,\mathbb{Q})$
and $B\in\text{Skew}(d)\cap\text{Mat}(d,\mathbb{Q})$.
\end{thm}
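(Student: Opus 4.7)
My plan is to translate the CFT-theoretic statement of rationality into a lattice-theoretic condition on the Narain momentum lattice $\Lambda_N$ (the even self-dual $(d,d)$-lattice generated by the rows of (2.4)), and then solve that condition explicitly using (2.4). The starting point is the standard characterization, which forms an earlier part of Wendland's Theorem 4.5.2: $\mathcal{C}(\Lambda,B)$ is rational if and only if $\Lambda_N$ admits a full-rank orthogonal decomposition $\Lambda_L \oplus \Lambda_R \subset \Lambda_N$ of finite index, where $\Lambda_L \subset \mathbb{R}^d \oplus \{\vec{0}\}$ and $\Lambda_R \subset \{\vec{0}\} \oplus \mathbb{R}^d$. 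This reflects the fact that the extended chiral algebra of a rational Narain CFT must contain a lattice VOA in each chirality, generated precisely by the chiral vectors of $\Lambda_N$.

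Reading off (2.4), a purely left-moving vector $(\vec{p}_L,\vec{0})\in\Lambda_N$ forces $\vec{P}=\tfrac{1}{2}(I-B)\vec{e}$ with $\vec{e}\in\Lambda$; since $\vec{P}\in\Lambda^{*}$ is required, the chirality condition becomes $(I-B)\vec{e}\in 2\Lambda^{*}$. Likewise, a purely right-moving vector requires $(I+B)\vec{e}\in 2\Lambda^{*}$. Rationality is therefore equivalent to the existence of a finite-index sublattice of $\Lambda$ on which both conditions hold simultaneously. For the ``if'' direction, if $G$ and $B$ have rational entries, then $G^{-1}$ is also rational, hence $\Lambda^{*}\otimes\mathbb{Q}=\Lambda\otimes\mathbb{Q}$; clearing a common denominator one finds an integer $N$ such that $(I\pm B)(N\Lambda)\subset 2\Lambda^{*}$, exhibiting $N\Lambda\subset\Lambda_{L}\cap\Lambda_{R}$ as a full-rank common sublattice.

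For the converse, set $\Lambda_{0}:=\Lambda_{L}\cap\Lambda_{R}$, still of full rank in $\Lambda$. On $\Lambda_{0}$ both $(I-B)\vec{e}$ and $(I+B)\vec{e}$ lie in $2\Lambda^{*}$; adding them yields $\vec{e}\in\Lambda^{*}$, so $\Lambda_{0}\subset\Lambda\cap\Lambda^{*}$, and as $\Lambda_{0}$ is of full rank the $\mathbb{Q}$-spans of $\Lambda$ and $\Lambda^{*}$ coincide, i.e., $G\in\mathrm{GL}(d,\mathbb{Q})$. Subtracting yields $B\vec{e}\in\Lambda^{*}\otimes\mathbb{Q}=\Lambda\otimes\mathbb{Q}$ on a full-rank sublattice, which says that $B$, viewed as a bilinear form on $\Lambda$, is rational-valued, i.e., $B\in\mathrm{Mat}(d,\mathbb{Q})$.

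The main obstacle is the initial reduction in the first paragraph: turning the analytic content ``finitely many irreducible characters'' into the discrete statement that $\Lambda_N$ contains a full-rank orthogonal chiral sum. This is the genuinely nontrivial input, and it rests on the lattice-VOA description of Narain CFTs together with the rationality criterion for lattice VOAs (positive-definite lattice with finite-index dual). Once that reduction is granted, the remainder is the elementary linear algebra above, and the only remaining subtlety is the basis convention used to interpret ``$B$ is rational''\,---\,taking $B$ as a bilinear form on $\Lambda$ makes the statement manifestly T-duality invariant.
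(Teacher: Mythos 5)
First, a point of comparison: the paper does not prove this statement at all --- it is imported verbatim as part of Theorem 4.5.2 of Wendland's thesis --- so there is no in-paper argument to measure yours against. Your strategy (reduce rationality to the existence of a finite-index purely chiral sublattice $\Lambda_L\oplus\Lambda_R$ of the Narain lattice, then read off the chirality conditions $(I\mp B)\vec e\in 2\Gamma^{*}$ from eq.\ (2.4)) is the standard route and matches how the result is usually organized. You are candid that the reduction ``finitely many irreducible characters $\Leftrightarrow$ finite-index chiral decomposition'' is taken as input; just be aware that this reduction is the substantive content of the theorem, so what you actually supply is the surrounding linear algebra.

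Within that linear algebra there is one genuine gap, in the ``if'' direction. Writing $\vec e=\gamma n$ with $n\in\mathbb Z^{d}$ and $\Gamma^{*}=\gamma G^{-1}\mathbb Z^{d}$, the condition $(I\mp B)\vec e\in 2\Gamma^{*}$ is equivalent to $\tfrac12\bigl(G\mp\gamma^{\top}B\gamma\bigr)n\in\mathbb Z^{d}$, so the object that must be rational to ``clear a common denominator'' is $\gamma^{\top}B\gamma$ --- the B-field as a bilinear form on the compactification lattice --- not the ambient matrix $B_{IJ}$. These are inequivalent unless $\gamma$ is itself rational: for $\gamma=\diag(1,\sqrt2)$ and $B_{12}=1$ one has $G$ and $B$ rational but $\gamma^{\top}B\gamma$ irrational, and the condition above forces $n=0$, so there are no nonzero chiral vectors at all; conversely, the paper's own example in Section 3.2.2 has $B_{12}=\sqrt3$ irrational yet $\gamma^{\top}B\gamma$ rational and the theory rational. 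You flag this basis ambiguity in your closing sentence, but you then use the two readings interchangeably (ambient $B$ in the forward direction, bilinear-form $B$ in the converse), so as written the forward implication does not go through. The fix is to commit throughout to the bilinear-form reading, i.e.\ to interpret the hypothesis as $\gamma^{\top}B\gamma\in\mathrm{Mat}(d,\mathbb Q)$; this is the reading consistent with the hypothesis \eqref{eq3.31} of Theorem \ref{thm3.1} and with T-duality, and with it both directions of your argument are sound.
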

Here by $\text{Skew}(d)$ we mean the set of $d\times d$ antisymmetric matrixes. 
Thanks to this theorem, determining whether the Narain CFT is rational or 
not has become straightforward. Therefore, we will proceed to ascertain the 
conditions of the theorem\ref{thm2.1} under which Narain CFTs, specifically 
Narain RCFTs, can be constructed through error-correcting codes. In this pursuit, 
it is imperative for us to explicitly derive the possible combinations of primary 
field irreducible characters that constitute the partition function.
To achieve this goal, we employ the technique of simple current orbifolds as 
elucidated in \ref{sec2.3}.

\subsection{Brief Review of Construction A}\label{sec2.2}
This subsection primarily describes how to construct Narain CFT using codes.
This construction was initially discovered by Dymarsky and Shapere \cite{dymarsky2020a}, and various 
extensions have been considered.
What all of these extensions have in common is that there is a correspondence 
between the properties of the codes and the properties of the lattice.
Thus, Narain CFT can be analyzed using codes, and it can also be transformed 
into discrete mathematical objects \cite{furuta2022}.
In this article, We will begin by the more general $\mathbb{F}_p$ code as a code and show 
how stabilizer codes appear as a special case of the code.
Here we will focus on the construction A, which enables us to 
relate the weight enumerator polynomial to the partition function explicitly.

Let $\mathcal{C}$ be a code over $\mathbb{F}_p$ with length $2n$ and dimension $n$,
i.e., $\mathcal{C}$ has the $n\times 2n$ generator matrix G with each element in 
$\mathbb{F}_p$.
In addition, define the dual code $\mathcal{C}^{\perp}$ as 
\begin{equation}
    \mathcal{C}^{\perp}=\{c'\in \mathbb{F}_p \mid c\cdot c'=0 \text{in}\ \mathbb{F}_p \},
\end{equation}
and we say $\mathcal{C}$ is self-dual when $\mathcal{C}=\mathcal{C}^{\perp}$.
Evenness of $\mathcal{C}$ is defined to be the case that all the elements of C are even.
Note that here we employed the Lorentzian metric 
$\begin{pmatrix}
    0 & I \\ I & 0
\end{pmatrix}$
where $I$ is $n\times n$ identity matrix.
For a codeword $c$, we denote the first $n$ elements of $c$ by $\alpha$
and the other by $\beta$, so that $c=(\alpha, \beta)$.
We are now ready to define the lattice constructed from $\mathcal{C}$.
\begin{dfn}
    Let $\mathcal{C}$ be a code over $\mathbb{F}_p$ with length $2n$.
    Define the lattice $\Lambda_{\mathcal{C}}$ by
\begin{equation}
    \Lambda_{\mathcal{C}}=\left\{ \left(\frac{\alpha +pk_1}{\sqrt{p}},\frac{\beta+pk_2}{\sqrt{p}}\right) 
    \mid c=(\alpha,\beta)\in\mathcal{C},\ (k_1,k_2)\in\mathbb{Z}^{2n}\right\}.
\end{equation}
\end{dfn}
Then one finds that if $\mathcal{C}$ is even self-dual, so is $\Lambda_{\mathcal{C}}$ with respect
to the Lorentzian metric 
$\begin{pmatrix}
    0&I\\I&0
\end{pmatrix}$.
Therefore, one can construct a Narain CFT from a even self-dual code.

In particular, one can bring the generator matrix of the $\mathbb{F}_p$-code 
to the standard form as 
\begin{equation}\label{eq2.10}
    G=\begin{pmatrix}
        I & B^{\top}
    \end{pmatrix}
\end{equation}
where $B$ is $\mathbb{F}_p$-valued $n\times n$ antisymmetric matrix.
If $\mathcal{C}$ has a generator matrix of form eq.(\ref{eq2.10}), we say
$\mathcal{C}$ to be B-form.
For a B-form $\mathcal{C}$, the generator matrix of $\Lambda_{\mathcal{C}}$
is following:
\begin{equation}\label{eq2.11}
    \Lambda=\begin{pmatrix}
        pI & B \\ 0 & I 
    \end{pmatrix}/\sqrt{p}.
\end{equation}
We abused the notation of the generator matrix of a lattice to be the same
as the lattice itself. 
This Narain CFT corresponds to the $n$ bosons compactified on cubic lattice
with size $\sqrt{\frac{2}{p}}$ and B-field $B$.
It should be noted that the metric of $\Lambda_{\mathcal{C}}$ differs from 
the one of eq.(\ref{eq2.4}).
To bring these forms into alignment, rotate the vectors as 
\begin{equation}
    \left(\frac{\alpha +pk_1}{\sqrt{p}},\frac{\beta+pk_2}{\sqrt{p}}\right)^{\top}
    \mapsto \frac{1}{\sqrt{2}}\begin{pmatrix}
        I & I \\ I & -I
    \end{pmatrix}
    \left(\frac{\alpha +pk_1}{\sqrt{p}},\frac{\beta+pk_2}{\sqrt{p}}\right)^{\top},
\end{equation}
and they will coincide with $(\vec{p}_L,\vec{p}_R)^{\top}$.
Explicitly, they are related as 
\begin{equation}\label{eq2.13}
    \begin{pmatrix}
        \vec{p}_L \\ \vec{p}_R
    \end{pmatrix}
    =\frac{1}{\sqrt{2}}
    \begin{pmatrix}
        \frac{\alpha +\beta +p(k_1+k_2)}{\sqrt{p}}\\
        \frac{\alpha -\beta +p(k_1-k_2)}{\sqrt{p}}
    \end{pmatrix}.
\end{equation}

It should be noted that there are $O(1,1)$ degrees of freedom 
in this codeword to momentum construction as noted in \cite{angelinos2022}. 
In other words, 
the Narain lattice metric is preserved even if an element of 
$O(1,1)$ is applied to each codeword-derived vector component by component. 
Therefore, this redundancy should be taken into account in the discussion, 
but it is not considered in this paper because it can be discussed without 
considering this redundancy.

To close this subsection, we will mention that for $p=2$, 
the code can be regarded as a quantum stabilizer code.
First, define the Hilbert space $\mathcal{H}$ of qubit states as $(\mathcal{C}^{2})^{\otimes n}$, 
and its basis $\{\ket{x_1\dots x_n}\mid x_i \in \mathbb{F}_2 \}$.
Then a $k$ dimensional subspace of $\mathcal{H}$ which we denote by $\mathcal{H}_{\mathcal{C}}$
is called stabilized subspace, if a group of Pauli operator $\mathscr{S}$
stabilize all elements of $\mathcal{H}_{\mathcal{C}}$. We now 
suppose that each element of $\mathscr{S}$ can be written in terms of 
Pauli operators $X$ and $Z$, where
\begin{equation}
    X :\ket{x}\mapsto\ket{x+1},\ Z : \ket{x}\mapsto (-1)^{x}\ket{x}.
\end{equation}
Then we can write each generator $s_i$ of $\mathscr{S}$ explicitly as 
\begin{equation}
    s_i=i^{\alpha_i \cdot\beta_i}\varepsilon\left(
        X^{(\alpha_i )_1}\otimes\cdots\otimes X^{(\alpha_i)_n}
    \right)\left(
        Z^{(\beta_i)_1}\otimes\cdots\otimes Z^{(\beta_i) _n}
    \right),\ \alpha_i,\beta_i \in \{0,1\} ^n
\end{equation}
where $\varepsilon$ is a constant phase factor. 
As can be easily seen, $\mathscr{S}$ must be commutative for $\mathscr{S}$ to 
stabilize the code subspace.
This condition is replaced by the following equation using $\alpha$ and $\beta$:
\begin{equation}\label{eq2.16}
    ^{\forall}i\neq ^{\forall}j,\ (\alpha_i,\beta_i)
    \begin{pmatrix}
        0 & I \\ -I & 0
    \end{pmatrix}
    \begin{pmatrix}
        \alpha_j ^{\top}\\
        \beta_j ^{\top}
    \end{pmatrix}=0 \mod 2.
\end{equation}
Introducing the generator matrix of $\mathcal{C}$ as 
\begin{equation}
    G=\begin{pmatrix}
    \alpha_1 & \beta_1 \\
    \vdots & \vdots \\
    \alpha_{n-k} & \beta_{n-k}   
    \end{pmatrix},
\end{equation}
the condition \eqref{eq2.16} becomes
\begin{equation}
    GWG^{\top}=0 \mod 2,\ W=
    \begin{pmatrix}
        0 & I \\ -I & 0
    \end{pmatrix}.
\end{equation}
If a classical code $\mathcal{C}$ is even self-dual, 
the generator matrix $G_{\mathcal{C}}$ of $\mathcal{C}$
satisfies 
\begin{equation}
    G_{\mathcal{C}}g G_{\mathcal{C}}^{\top}=0\ \mod 2, g=
    \begin{pmatrix}
        0 & I \\ I & 0
    \end{pmatrix}.
\end{equation}
Note that $g$ and $W$ are equivalent modulo 2.
Therefore, when C is a classical even self-dual code, it readily 
forms a stabilizer code as well.
Of course, since $g$ and $W$ are not equal in general mod $p$, not all 
$\mathbb{F}_p$ codes can be regarded as stabilizer codes. However, by considering a 
special class of codes, it is possible to discover codes that are both self-dual 
and commutative \cite{kawabata2022}.

\subsection{Simple Current Orbifold}\label{sec2.3}
In this subsection, we describe the simple current orbifold developed by 
Kreuzer and Schellekens in \cite{kreuzer1994}. This method of constructing another RCFT from one 
RCFT is governed by a degree of freedom known as discrete torsion. In other words, 
the simple current orbifold is constructed using a decomposition of the partition 
function in terms of characters, where the discrete torsion determines how the 
characters are combined.
Here we assume that the RCFT is a Narain CFT.

First, we introduce the simple current.
A chiral primary operator $\mathcal{O}_{\vec{p}_L}$ is referred to as 
simple current, if for every fusion product with a primary field 
it produces single primary field.  
In fact, all primary fields of the chiral part of a Narain CFT are simple current. 
This is because the fusion product of any two operators is given by the addition of 
lattice vectors,
\begin{equation}
    \mathcal{O}_{\vec{p_1}_L}(z)\times \mathcal{O}_{\vec{p_2}_L}(0)
    \sim z^{-\vec{p_1}_L \cdot\vec{p_2}_L}
        \mathcal{O}_{\vec{p_1}_L +\vec{p_2}_L}(z).
\end{equation}
In particular, the fusion product is commutative.
We consider the case that a set of simple currents $\{J_1,\dots ,J_k\}$ constitute
a commutative algebra $K=\mathbb{Z}_{N_1}\times\cdots\times\mathbb{Z}_{N_k}$.
Then, given a Verlinde algebra $\mathcal{V}$, one can write down the modular invariant partition function as 
\begin{equation}
    \mathcal{Z}(\tau,\bar{\tau})=\sum_{a\in\mathcal{V},\beta\in K}M_{a,\beta}
    \chi _a \bar{\chi}_{[J]^{\beta}a}
\end{equation}
where $\chi$ are characters of the chiral algebra and 
$[J]^{\beta}a$ means $J_1^{\beta _1}\cdots J_k^{\beta _k}a$. 
Here the coefficients $M_{a,\beta}$ are determined by the following equation
\footnote{Originally, the constant $\text{Mult}(a)$ is multiplied to the right hand side, 
but since it is always $1$ in this paper, it is omitted.}
\begin{equation}\label{eq.2.23}
    M_{a,\beta}=\prod_{i}\delta ^1(Q_i (a)+X_{ij}\beta _j),
\end{equation}
and we will describe what are $Q_i(a)$ and $X_{ij}$.
$Q_i (a)$ is called monodromy charge, defined using the conformal dimension $h(\mathcal{O})$ of 
a primary field $\mathcal{O}$ by 
\footnote{Although this definition is different by the sign from \cite{kreuzer1994},
we have inverted the sign to conform to the notation in \cite{fuchs2004}.}
$Q_i (a)=h(J_i a)-h(J_i )-h(a)$. 
$X$ is defined by the sum of two $n\times n$ matrices. One is the symmetric matrix $R_{ij}$, 
defined by
\begin{equation}\label{eq2.24}
    h(J_1 ^{\alpha_1}\cdots J_k ^{\alpha_k})=
    \frac{1}{2}\sum_{i,j} \alpha_i R_{ij}\alpha_j +
    \frac{1}{2}\sum_{i} r_{ii}\alpha_i\ \mod 1
\end{equation}
where $R_{ij}=\frac{r_{ij}}{N_i}\ (r_{ij}\in\mathbb{Z})$. The other is the antisymmetric matrix $e_{ij}$, 
which plays the role of 
degrees of freedom when the orbifold is performed.
Specifically, it is expressed as $e_{ij}=\frac{\epsilon_{ij}}{N_i},\ (\epsilon_{ij}\in\mathbb{Z})$.
Then the sum $\frac{R}{2}+e$ is written as $X$.
Thus, the equation \eqref{eq.2.23} implies that any combination of $\alpha$ and $\beta$ 
such that the equation
\begin{equation}\label{eq2.25}
    Q_i(a)+X_{ij}\beta_j =Q_i(a)+\left(\frac{R_{ij}}{2}+e_{ij}\right)\beta_j = 0\ \mod 1\ 
    \text{for}\ \forall i
\end{equation} 
is satisfied will appear in the character decomposition of the partition function.
Therefore, we call equation \eqref{eq2.25} the twist equation, and in the following section 
we will investigate a 
certain simple current orbifold of the Narain CFT by studying the solution of the 
twist equation.

Before going to the next section, we will give an easy example, which 
is the code CFT constructed from the $p=2$ B-form code.  
First, consider the B-form code with the generator matrix 
\begin{equation}
    G=\begin{pmatrix}
        1 & 0 & 0 & 0 \\
        0 & 1 & 0 & 0 
    \end{pmatrix}.
\end{equation}
Then the partition function of the code CFT constructed from the above code 
can be written as 
\begin{equation}
    \mathcal{Z}_0 (\tau ,\bar{\tau})=\sum_{i,j=0}^{3}\chi_{(i,j)}\bar{\chi}_{(-i,-j)},\ 
    \chi_{(i,j)}=\chi_i \chi_j ,\ \bar{\chi}_{(i,j)}=\bar{\chi}_i \bar{\chi}_j
\end{equation}
where $\chi_i$ and $\bar{\chi}_{-i}$ are 
\begin{equation}
    \chi_i (\tau)=\frac{1}{\eta (\tau)^2}\sum_{n\in\mathbb{Z}}q^{2(n+\frac{i}{4})^2},\ 
    \bar{\chi}_i (\bar{\tau})=\frac{1}{\bar{\eta (\tau)}^2}\sum_{m\in\mathbb{Z}}\bar{q}^{2(m+\frac{i}{4})^2}
\end{equation}
with $q=e^{2\pi i\tau}$, $\bar{q}=e^{2\pi i\bar{\tau}}$.
Therefore one finds the Verlinde algebra is $K=\mathbb{Z}_4 \times\mathbb{Z}_4$
and can consider simple currents generated by $\mathcal{O}_{(2,0)}$ and 
$\mathcal{O}_{(0,2)}$ (Here the subscripts represent the labels of the chiral 
algebra.). These chiral primary operators consists of the $\mathbb{Z}_2 \times \mathbb{Z}_2$
subgroup of $K$, so one can take the simple current orbifold of this theory 
with these operators.
In this case, the monodromy charge $Q_i(a)$ of a operator $\mathcal{O}_{(a_1 ,a_2)}\ (a=(a_1,a_2)\in K)$
becomes the inner product of two momentum vector $\vec{p}_1 \cdot\vec{p}_2$
since the conformal dimension is given by $|\vec{p}|^2 /2$.
In this case, the matrix $R_{ij}$ can easily verified to be identity matrix $\delta_{ij}$
, and we set the discrete torsion $e=\begin{pmatrix}
    0 & \frac{1}{2} \\ -\frac{1}{2} & 0
\end{pmatrix}$
, which is the only nontrivial one with $n=2$.
Therefore one obtains the twist equation
\begin{equation}\label{eq2.29}
    \frac{2\cdot a_i}{4}+\frac{\delta_{ij}+\epsilon _{ij}}{2}\beta_{j}=0 \mod 1
\end{equation}
for $i=1,2$.
Since $\beta$ takes the value in $\mathbb{Z}_2 \times\mathbb{Z}_2$, 
the solution of \eqref{eq2.29} is found to be 
$(a_1,a_2)=(0,0),(0,2),(2,0),(2,2)$ for $(\beta_1 ,\beta_2)=(0,0),(1,1)$,
and $(a_1,a_2)=(1,1),(1,3),(3,1),(3,3)$ for $(\beta_1 ,\beta_2)=(1,0),(0,1)$.
As a remark, if we fix $\beta$ then there always exist some solutions of the 
twist equation, however, there does not always for fixed $a$.
Then the partition function after taking the simple current orbifold 
can be written as 
\begin{equation}
    \mathcal{Z}_{\text{orb}}(\tau,\bar{\tau})=
    |\chi_{(0,0)}'|^2 +|\chi_{(1,1)}'|^2 + \chi_{(1,0)}' \bar{\chi}_{(0,1)}'
    +\chi_{(0,1)}' \bar{\chi}_{(1,0)}'
\end{equation}
where $\chi_{(i,j)}'(i,j\in\mathbb{Z}_2)$ are 
\begin{equation}
    \begin{split}
        \chi_{(0,0)}'&= \chi_{(0,0)}+\chi_{(2,2)},\ \chi_{(1,1)}'=\chi_{(2,0)}+\chi_{(0,2)},\\
        \chi_{(1,0)}'&= \chi_{(1,1)}+\chi_{(3,3)},\ \chi_{(0,1)}'=\chi_{(1,3)}+\chi_{(3,1)}.
    \end{split}
\end{equation}
This partition function is same as the one of the code CFT constructed from B-form 
code 
\begin{equation}
    G' = \begin{pmatrix}
        1 & 0 & 0 & 1 \\ 0 & 1 & 1 & 0
    \end{pmatrix},
\end{equation}
which is the tensor square of a boson compactified on a circle with radius $R=\sqrt{2}$.
With this example, one would expect that the discrete torsion $e$ is 
related to the B-form of the underlying B-form code. In the following section,
we will explain how these notions are connected, by explicitly ``reconstructing'' the 
underlying code.

\section{Main Theorems}
This section is dedicated to describing our main findings, a portion of which elucidates 
when a Narain CFT is rational and exhibits a code structure. Naturally, this condition 
aligns with Wendland's work \cite{wendland2000moduli}, but we also propose that by imposing additional 
conditions, a rational Narain CFT can be viewed as a simple current orbifold model of 
another CFT, where the discrete torsion can be represented by the compactifying lattice and
the antisymmetric B-field. We prove this statement at the theorem \ref{thm3.1}, 
and then consequently we claim that the B-form of an error-correcting code 
is related to the B-form,
which means that essentially these three objects (the B-field, 
the discrete torsion, and the B-form)  
can be seen as equivalent objects to each other.

\subsection{Decomposition of the Partition Function}\label{sec3.1}
First, we will prove that if the B-form and compactifying lattice satisfy 
certain conditions then the Narain CFT is rational. This will be done by 
decomposing the partition function into a finite sum 
of characters of primary fields.

In the following, we will consider the Narain CFT after 
performing a T-duality on the Narain lattice \eqref{eq2.4}, which acts
$\begin{pmatrix}
    0 & I \\ I & 0
\end{pmatrix}\in O(n,n;\mathbb{Z})$
from right and 
$\begin{pmatrix}
    I & 0 \\ 0 & -I
\end{pmatrix}\in O(n)\times O(n)$ from left
where $I$ is the $n\times n$ identity matrix.
Then we obtain the Narain lattice 
\begin{equation}
    \Lambda=
    \begin{pmatrix}
        \frac{I+B}{2}\gamma & \gamma^{\ast} \\
        \frac{I-B}{2}\gamma & -\gamma^{\ast}
    \end{pmatrix},
\end{equation}
so the momentum vectors $(\vec{p}_L ,\vec{p}_R)^{\top}$ can be expressed 
using integer vectors $(N,M)^{\top}\in (\mathbb{Z}^n)^2$
as 
\begin{equation}\label{eq3.2}
    \begin{pmatrix}
        \vec{p}_L \\ \vec{p}_R
    \end{pmatrix}
    =
    \begin{pmatrix}
        \frac{I+B}{2}\gamma & \gamma^{\ast} \\
        \frac{I-B}{2}\gamma & -\gamma^{\ast}
    \end{pmatrix}
    \begin{pmatrix}
        N \\ M
    \end{pmatrix}
    =
    \begin{pmatrix}
        \frac{I+B}{2}\gamma N + \gamma^{\ast}M\\
        \frac{I-B}{2}\gamma N - \gamma^{\ast}M
    \end{pmatrix}.
\end{equation}
As we have obtained the momentum vectors, the partition function 
\eqref{eq2.5} becomes
\begin{equation}\label{eq3.3}
    \mathcal{Z}(\tau,\bar{\tau})=\frac{1}{|\eta(\tau)|^{2n}}
    \sum_{N,M\in\mathbb{Z}^n}
    q^{\|\frac{I+B}{2}\gamma N + \gamma^{\ast}M\|^2 /2}
    \bar{q}^{\|\frac{I-B}{2}\gamma N - \gamma^{\ast}M\|^2 /2}.
\end{equation}
Noting that $\gamma^{\ast}=(\gamma^{\top})^{-1}$, 
the half of the vector norm $\frac{\|\vec{p}_L \|^2}{2}$ can be deformed into 
\begin{equation}\label{eq3.4}
    \frac{\|\vec{p}_L \|^2}{2}
    =\left( \frac{\gamma ^{\top}\gamma +\gamma^{\top} B\gamma}{2}N+M \right)^{\top}
    (\gamma^{\top}\gamma)^{-1}
    \left( \frac{\gamma ^{\top}\gamma +\gamma^{\top} B\gamma}{2}N+M \right).
\end{equation} 
Then, if the matrix $\gamma^{\top}\gamma$ is diagonal, the vector norm \eqref{eq3.4}
can be decomposed into a sum of each component of 
$\frac{\gamma ^{\top}\gamma +\gamma^{\top} B\gamma}{2}N+M$
weighted by the diagonal elements of $\gamma^{\top}\gamma$.
Therefore, in this paper we assume that the diagonal elements of $\gamma^{\top}\gamma$
are rational;
\begin{asmp}\label{asmp3.1}
For the compactifying lattice $\gamma$, the matrix $\gamma^{\top}\gamma$
is diagonal and rational in the sense that 
\begin{equation}
    \gamma^{\top}\gamma = \diag\left(\frac{2t_1}{s_1},\dots ,\frac{2t_n}{s_n}\right)
\end{equation}
where $s_i,t_i \in \mathbb{Z}_{>0}$ and 
$\text{GCD}(s_1,t_1)=\cdots =\text{GCD}(s_n ,t_n)=1$.
\end{asmp}
This assumption leads us to the decomposition of the vector norm \eqref{eq3.4}
\begin{equation}\label{eq3.6}
    \frac{\|\vec{p}_L\|^2}{2}=
    \sum_{i=1}^{n}\frac{s_i}{4t_i}
    \left( \frac{\gamma ^{\top}\gamma +\gamma^{\top} B\gamma}{2}N+M \right)_i ^2.
\end{equation}
Here $\left( \frac{\gamma ^{\top}\gamma +\gamma^{\top} B\gamma}{2}N+M \right)_i$
means the $i$-th component of 
$\left( \frac{\gamma ^{\top}\gamma +\gamma^{\top} B\gamma}{2}N+M \right)$.
Computing the character decomposition of the partition function,
we will start with the simple case for $\gamma^{\top}\gamma =\diag\left(
    \frac{2t}{s},\dots ,\frac{2t}{s}
\right)=
\frac{2t}{s}I$, for which we call the homogeneous case.

\subsubsection{Homogeneous Case}\label{sec3.1.1}
To begin with, we will calculate the Verlinde algebra for $B=0$.
Since $\gamma^{\top}\gamma$ is homogeneous, the vector norm \eqref{eq3.6}
is found to be 
\begin{equation}
    \frac{s}{4t}\left\| \frac{t}{d}N + M \right\|^2
    =
    st\left\| \frac{1}{2s}N + \frac{1}{2t}M \right\|^2,
\end{equation} 
and similarly for $\frac{\|\vec{p}_R \|^2}{2}$,
\begin{equation}
    \frac{\|\vec{p}_R \|^2}{2}=st\left\| \frac{1}{2s}N - \frac{1}{2t}M \right\|^2.
\end{equation}
Then,
one can explicitly write the partition function as 
\begin{equation}
    \mathcal{Z}(\tau ,\bar{\tau})=
    \frac{1}{|\eta(\tau)|^2}
    \sum_{N,M}\prod_{i=1}^{n}q^{st\left(\frac{N_i}{2s}+\frac{M_i}{2t}\right)^2}
    \bar{q}^{st\left(\frac{N_i}{2s}-\frac{M_i}{2t}\right)^2}.
\end{equation}
Rewriting $\frac{N_i}{2s}+\frac{M_i}{2t}=m_i + \frac{l_i}{2st}$ for 
$m_i \in\mathbb{Z},\ 0\leq l_i <2st\ (1\leq i\leq n)$, this leads to 
\begin{equation}\label{eq3.10}
    \begin{split}
        &\mathcal{Z}(\tau ,\bar{\tau})\\
        &=
    \frac{1}{|\eta(\tau)|^2}
    \sum_{\vec{l}=(l_1,\dots ,l_n)\in\mathbb{Z}_{2st}^n}
    \sum_{\substack{(m_1,\dots ,m_n)\in\mathbb{Z}^n\\(m'_1,\dots ,m'_n)\in\mathbb{Z}^n}}
    \prod_{i=1}^{n}q^{st\left(m_i+\frac{l_i}{2st}\right)^2}
    \bar{q}^{st\left(m_i ' +\frac{(2t-1)l_i}{2st}\right)^2}\\
    &=
    \sum_{\vec{l}\in \mathbb{Z}_{2st}^n}
    \chi ^{(2st)}_{\vec{l}}\bar{\chi}^{(2st)}_{\vec{l}'},\ 
    l'_i = (2t-1)l_i, 
    \end{split}
\end{equation}
with $\chi^{(2st)}_{\vec{l}}=\chi^{(2st)}_{l_1}\cdots\chi^{(2st)}_{l_n}$ being 
the $n$-fold product of irreducible character of a boson compactified on radius 
$R=\sqrt{\frac{2t}{s}}$ \cite{difrancesco1997a}.
Thus, one can verify that the Verlinde algebra for $B=0$ is $\mathbb{Z}_{2st}^n$.
Then we will show that a specific choice of the nontrivial $B$ 
corresponds to a simple current orbifold whose discrete torsion is 
related to $B$.
For demonstrating a simple current orbifold, we choose a set $S$ of simple current that is 
generated by 
$\{\mathcal{O}_{(2t,0,\dots ,0)}
,\dots ,\mathcal{O}_{(0,\dots ,0,2t)}\}$, which forms $\mathbb{Z}_s ^n$ symmetry. 
These simple currents correspond to $\{J_1,\dots ,J_n\}$ in the section \ref{sec2.3}
where $k=n$.
In order to clarify the statement, we set the proposition we are going to prove:
\begin{prop}\label{prop3.1}
Let $\mathcal{A}$ be a Narain CFT with its compactifying lattice satisfying the 
assumption \ref{asmp3.1} and homogeneous. 
If the following equation 
\begin{equation}\label{eq3.11}
    \left(\frac{\gamma^{\top}B\gamma}{2}\right)_{ij}=\frac{\epsilon _{ij}}{s}
\end{equation}
holds for some $\epsilon _{ij}\in\mathbb{Z}$, 
then $\mathcal{A}$ is identical to the simple current orbifold by $S$ of 
$B=0$ case \eqref{eq3.10}, whose discrete torsion is 
$\frac{\epsilon _{ij}}{s}$.
\end{prop}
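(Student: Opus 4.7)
The plan is to expand the $B\neq 0$ Narain partition function into a sum of products $\chi^{(2st)}_{\vec l}\bar\chi^{(2st)}_{\vec l'}$ and show that the resulting weights coincide with those produced by the orbifold formula (2.22) applied to the $B=0$ partition function (3.10) with discrete torsion $e_{ij}=\epsilon_{ij}/s$. I will identify the monodromy charges $Q_i$ and the symmetric matrix $R$ from the Verlinde data of (3.10), verify that the resulting twist equation (2.25) has solution set matching the $B\neq 0$ expansion, and specialize $\epsilon\to 0$ as a sanity check.

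Under Assumption \ref{asmp3.1} (homogeneous case) and condition (3.11), formula (3.6) yields
$$\frac{\|\vec p_L\|^2}{2}=st\sum_i\left(\frac{N_i}{2s}+\frac{M_i}{2t}+\frac{(\epsilon N)_i}{2st}\right)^{\!2},$$
with the right-moving analog having the last two terms sign-flipped. Writing each bracket as $m_i+l_i/(2st)$ with $m_i\in\mathbb{Z}$, $l_i\in[0,2st)$ (and similarly $m'_i,l'_i$ on the right) gives
$$l_i\equiv tN_i+sM_i+(\epsilon N)_i,\quad l'_i\equiv tN_i-sM_i-(\epsilon N)_i\pmod{2st}.$$
Splitting $\vec N=\vec\beta+s\vec N'$ with $\vec\beta\in\mathbb{Z}_s^n$ and taking the sum and difference of these congruences produces $l_i+l'_i\equiv 2t\beta_i\pmod{2st}$ (hence $l'_i\equiv -l_i+2t\beta_i$) together with $l_i-l'_i\equiv 2(\epsilon\beta)_i\pmod{2s}$, which combine into the single mod-$s$ selection rule $l_i\equiv t\beta_i+(\epsilon\beta)_i\pmod s$. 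After absorbing $\vec m,\vec m',\vec N',\vec M$ into the defining series of the characters, each admissible $(\vec l,\vec\beta)$ contributes $\chi^{(2st)}_{\vec l}\bar\chi^{(2st)}_{-\vec l+2t\vec\beta}$ exactly once. Using the identity $\chi^{(2st)}_l=\chi^{(2st)}_{-l}$ (immediate from sending $n\mapsto -n$ in $\chi^{(2st)}_l=\eta^{-1}\sum_n q^{st(n+l/(2st))^2}$) and the relabelling $\vec\beta\mapsto -\vec\beta$ recasts the partition function as
$$\mathcal Z=\sum_{\substack{\vec\beta\in\mathbb{Z}_s^n,\,\vec l\in\mathbb{Z}_{2st}^n\\ l_i+t\beta_i+\sum_j\epsilon_{ij}\beta_j\equiv 0\,(\mathrm{mod}\ s)}}\chi^{(2st)}_{\vec l}\,\bar\chi^{(2st)}_{\vec l+2t\vec\beta}.$$

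To match this with (2.22), I read off the orbifold data from (3.10): the simple currents $J_i=\mathcal{O}_{2t\vec e_i}$ have $h(J_1^{\alpha_1}\cdots J_n^{\alpha_n})\equiv\sum_i t\alpha_i^2/s\pmod 1$, so one may take $R_{ii}=2t/s$ and $R_{ij}\equiv 0\pmod 1$ for $i\neq j$, while the monodromy charge of $\vec l$ is $Q_i(\vec l)=l_i/s$. With discrete torsion $e_{ij}=\epsilon_{ij}/s$, the twist equation (2.25) reads exactly $l_i+t\beta_i+\sum_j\epsilon_{ij}\beta_j\equiv 0\pmod s$, which is the constraint derived above, and $[J]^{\vec\beta}\vec l=\vec l+2t\vec\beta$ matches the right-mover shift; setting $\epsilon=0$ restores (3.10) term by term (again using $\chi_l=\chi_{-l}$), so the base theory is indeed the $B=0$ Narain CFT. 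The main obstacle is the careful bookkeeping of signs and multiplicities: one must verify that the relabelling $\vec\beta\mapsto -\vec\beta$ combined with $\chi^{(2st)}_l=\chi^{(2st)}_{-l}$ precisely converts the right-moving shift $-\vec l+2t\vec\beta$ into the $[J]^{\vec\beta}$-action $\vec l+2t\vec\beta$, that the antisymmetry of $\epsilon$ causes the mod-$s$ constraint to land with sign $+\epsilon_{ij}/s$ rather than $-\epsilon_{ij}/s$, and that each $(\vec l,\vec\beta)$ pair appears with multiplicity one on both sides so the $\{0,1\}$-valued indicator $M_{\vec l,\vec\beta}$ faithfully reproduces the direct sum.
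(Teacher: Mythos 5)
Your proposal is correct and follows essentially the same route as the paper: decompose the lattice sum into characters $\chi^{(2st)}_{\vec l}\bar\chi^{(2st)}_{-\vec l+2t\vec\beta}$, extract the mod-$s$ selection rule on $(\vec l,\vec\beta)$, and identify it with the twist equation by computing $Q_i(\vec l)=l_i/s$, $R=\tfrac{2t}{s}I$ and $e=\epsilon/s$. The only organizational difference is that you establish the one-to-one counting between $(N,M)$ and $(\vec l,\vec\beta,\vec m,\vec m')$ via the split $N=\vec\beta+sN'$ and congruences (which does check out: $N_i=\beta_i+s(m_i+m'_i)$ and $M_i$ is then forced and integral precisely when the selection rule holds), whereas the paper packages this step as Lemma 3.2 with an explicit integer inverse matrix.
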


As described previously, the former part of this proposition is included in the 
condition of the theorem 4.5.2 of \cite{wendland2000moduli}.
Therefore Narain CFTs satisfying this condition are guaranteed to be 
rational. 
To this theorem we give a new interpretation known as simple current orbifold.
In what follows, we suppose that the equation \eqref{eq3.11} holds.
To prove the proposition \ref{prop3.1}, it is necessary to calculate 
$\frac{\|\vec{p}_L\|^2}{2}$ and $\frac{\|\vec{p}_R\|^2}{2}$.
In this case, these norms are expressed as
\begin{equation}\label{eq3.12}
    \begin{split}
        \frac{\|\vec{p}_L\|^2}{2}&=st\left\| \frac{N}{2s} +\frac{\epsilon N}{2st} 
        +\frac{M}{2t} \right\|^2 \\
        \frac{\|\vec{p}_R\|^2}{2}&=st\left\| \frac{N}{2s} -\frac{\epsilon N}{2st} 
        -\frac{M}{2t} \right\|^2
    \end{split}
\end{equation}
using the matrix $(\epsilon)_{ij}=\epsilon_{ij}$, 
so the proposition \ref{prop3.1} can be proved by 
decomposing the momenta \eqref{eq3.12} into some irreducible characters
and showing that all the combination of labels of left and right Verlinde algebras
follow all the solutions of the twist equation.
The simple current orbifold implies that 
when the label of the left Verlinde algebra is $\vec{l}$, one of the right algebra is
$-\vec{l}-2t\vec{\beta}$.
Therefore if the non-integer parts of the R.H.S. of \eqref{eq3.12} represent
the labels of the left and right Verlinde algebra $\mathbb{Z}_{2st}^n$,
$\vec{\beta}=-N$ because $-\left(\frac{N}{2s} +\frac{\epsilon N}{2st} 
+\frac{M}{2t}\right)-\left(\frac{N}{2s} -\frac{\epsilon N}{2st} 
-\frac{M}{2t}\right)=\frac{2t\cdot (-N)}{2st}$. 
Accordingly, we have two lammas:
\begin{lem}\label{lem3.1}
Let $N,M\in\mathbb{Z}^n$ be fixed, and let $\vec{m}=(m_1,\dots ,m_n)^{\top}$,
$\vec{m}'=(m'_1,\dots ,m'_n)^{\top}\in \mathbb{Z}^n$, and $\vec{l}=(l_1,\dots ,l_n)^{\top}\in
\mathbb{Z}_{2st}^n$ satisfy $\frac{N}{2s} +\frac{\epsilon N}{2st} 
+\frac{M}{2t}=\vec{m}+\frac{\vec{l}}{2st}$, $\frac{N}{2s} -\frac{\epsilon N}{2st} 
-\frac{M}{2t}=\vec{m}'+\frac{-\vec{l}+2tN}{2st}$.
Then the pair $(\vec{l},-N)$ is one of the solution of the equation for $(\vec{r},\vec{\beta})$,
\begin{equation}\label{eq3.13}
    \frac{\vec{r}}{s}+
    \left(\frac{t}{s}I+\frac{\epsilon}{s}\right)\vec{\beta}=0\ \mod 1.
\end{equation}  
In addition, this equation is equivalent to the twist equation of 
the simple current orbifold of $S$ with discrete torsion $\frac{\epsilon}{s}$.
\end{lem}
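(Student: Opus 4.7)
My plan is to dispatch the lemma's two assertions separately. The first, that $(\vec{l},-N)$ solves \eqref{eq3.13}, is purely algebraic and follows by direct substitution once $\vec{l}$ is written explicitly in terms of $N$, $M$, and $\vec{m}$. The second, that \eqref{eq3.13} coincides with the twist equation of the simple current orbifold by $S$ with discrete torsion $\epsilon/s$, requires reading off the ingredients $Q_i$, $R_{ij}$, and $e_{ij}$ from the $B=0$ character decomposition \eqref{eq3.10} and from the definitions of the $J_i$, and plugging them into the twist equation \eqref{eq2.25}.

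For the first step, multiplying the defining relation $\vec{m}+\vec{l}/(2st) = N/(2s) + \epsilon N/(2st) + M/(2t)$ through by $2st$ yields the explicit lift $\vec{l} = tN + \epsilon N + sM - 2st\vec{m}$. Substituting $\vec{r}=\vec{l}$ and $\vec{\beta}=-N$ into the left-hand side of \eqref{eq3.13} then reduces it to $M - 2t\vec{m}\in\mathbb{Z}^n$, so the congruence holds mod $1$. The second of the two defining relations, involving $\vec{m}'$ and $-\vec{l}+2tN$, is automatically consistent with the first: summing them forces $\vec{m}+\vec{m}'=0$, which is just compatibility of the two lifts and provides no extra constraint.

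For the second step, from \eqref{eq3.10} one reads off $h(\mathcal{O}_{\vec{l}}) = \sum_i l_i^2/(4st)$, and in particular $h(J_i) = t/s$. Since $J_1^{\alpha_1}\cdots J_n^{\alpha_n}$ carries the label $2t\vec{\alpha}\in\mathbb{Z}_{2st}^n$, the expansion \eqref{eq2.24} is satisfied by the canonical choice $R_{ij} = (2t/s)\delta_{ij}$, with the linear correction $r_{ii}=2t$ absorbing the residual integer ambiguity. A direct computation then gives $Q_i(\vec{l}) = ((l_i+2t)^2 - l_i^2)/(4st) - t/s \equiv l_i/s \pmod{1}$. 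With discrete torsion $e_{ij}=\epsilon_{ij}/s$, the matrix $X = R/2 + e$ becomes $(tI + \epsilon)/s$, so \eqref{eq2.25} reduces to \eqref{eq3.13} after the identification $\vec{r}\leftrightarrow\vec{l}$.

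The main subtlety is the freedom in \eqref{eq2.24}: $R_{ii}$ and $r_{ii}$ are determined by the conformal weights only up to compensating integer shifts, so one has to fix the lift carefully in order for $X = R/2 + e$ to literally equal $(tI+\epsilon)/s$ rather than merely being congruent to it mod $1$. Once this lift is pinned down using $h(J_i) = t/s$, the comparison with \eqref{eq3.13} is immediate.
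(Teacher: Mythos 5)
Your proof is correct, and its overall architecture (verify the substitution $(\vec r,\vec\beta)=(\vec l,-N)$ directly, then identify $Q_i$, $R_{ij}$, and $e_{ij}$ and match \eqref{eq2.25} with \eqref{eq3.13}) is the same as the paper's. The first half is literally the paper's computation in a slightly different dress: you clear denominators to get the lift $\vec l = tN+\epsilon N+sM-2st\vec m$ and find the left side of \eqref{eq3.13} equals $M-2t\vec m\in\mathbb{Z}^n$, while the paper substitutes into $2t(\vec m+\vec r/(2st))+(tI/s+\epsilon/s)\vec\beta$ and gets $M$; the discrepancy is an integer, so both are fine mod $1$. The one genuine divergence is in the second half: the paper extracts the monodromy charge and $h(J_i)$ from the momentum-lattice inner product $\vec p_L\cdot\vec p_L'$ via \eqref{eq3.17}, then solves the mod-$1$ equation $t/s=(s+1)r_{11}/(2s)$ for $r_{11}$ and separately checks $R_{12}=0$, whereas you read $h(\mathcal{O}_{\vec l})=\sum_i l_i^2/(4st)$ off the $B=0$ decomposition \eqref{eq3.10} and compute $Q_i(\vec l)=((l_i+2t)^2-l_i^2)/(4st)-t/s\equiv l_i/s$ from the definition $Q_i(a)=h(J_ia)-h(J_i)-h(a)$. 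Your route is more self-contained (it never leaves the $B=0$ character data and gets all of $R$ in one stroke from the label $2t\vec\alpha$), while the paper's makes explicit the identification of $Q_i$ with a lattice pairing, which is the conceptual bridge it wants for the rest of Section 3. Your closing remark about fixing the integer lift of $R_{ii}$, $r_{ii}$ so that $X=R/2+e$ equals $(tI+\epsilon)/s$ on the nose is a point the paper glosses over, and you resolve it the same way ($r_{ii}=2t$), so there is no gap.
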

\begin{proof}
    Since L.H.S. the equation \eqref{eq3.13} is defined up to $\mathbb{Z}$,
    this is equal to  
    \begin{equation}
        2t\cdot \left(\vec{m}+\frac{\vec{r}}{2st}
                \right)
        +\left(\frac{t}{s}I+\frac{\epsilon}{s}\right)\vec{\beta}.
    \end{equation}
Substituting $(\vec{l},-N)$ for $(\vec{r},\vec{\beta})$ and using 
$\frac{N}{2s} +\frac{\epsilon N}{2st} 
+\frac{M}{2t}=\vec{m}+\frac{\vec{l}}{2st}$, one has 
\begin{equation}
    2t\left(\frac{N}{2s} +\frac{\epsilon N}{2st} 
    +\frac{M}{2t}\right)-
    \left(\frac{t}{s}I+\frac{\epsilon}{s}\right)N
    =0\ \mod1,
\end{equation}
which implies the former part of the statement.

Then we will prove that the equation \eqref{eq3.13} is of the same form of 
the twist equation.
First, we will find the monodromy charge $Q_i(\vec{r})$,
which is simply the inner product of two momentum vectors $\vec{p}_L \cdot\vec{p}'_L\ \mod 1$.
Thus $Q_i(\vec{r})$ can be described by the product of two labels of the left Verlinde algebra,
and one can show this as follows.
For simplicity, consider the first component $Q_1(\vec{r})$. This is given 
by the monodromy charge of $\mathcal{O}_{(2t,0,\dots ,0)}$ with the primary operator
of the Verlinde algebra
label $\vec{r}=(r_1 ,\dots ,r_n)^{\top}$.
We take the integer vectors $N,M,N',M',\vec{m},\vec{m}'$ such that 
\begin{equation}
    \begin{split}
        \frac{N}{2s}+\frac{\epsilon N}{2st}+\frac{M}{2t}&=
        \vec{m}+\frac{(2t,0,\dots ,0)^{\top}}{2st}\\
        \frac{N'}{2s}+\frac{\epsilon N'}{2st}+\frac{M'}{2t}&=
        \vec{m}'+\frac{\vec{r}}{2st},
    \end{split}
\end{equation}
and $N,M,N',M'$ label the left momentum vector $\vec{p}_L,\ \vec{p}'_{L}$ as 
eq. \eqref{eq3.2}.
Therefore the inner product $\vec{p}_L \cdot\vec{p}'_{L}$ is given by
\begin{equation}\label{eq3.17}
    \begin{split}
        \vec{p}_L \cdot\vec{p}'_{L}&=
        \left(\frac{I+B}{2}\gamma N +\gamma^{\ast}M\right)^{\top}
        \left(\frac{I+B}{2}\gamma N' +\gamma^{\ast}M'\right)\\
        &=\left( \frac{\gamma ^{\top}\gamma +\gamma^{\top} B\gamma}{2}N+M \right)^{\top}
        (\gamma^{\top}\gamma)^{-1}
        \left( \frac{\gamma ^{\top}\gamma +\gamma^{\top} B\gamma}{2}N'+M' \right)\\
        &=\frac{s}{2t}
        \left(\frac{t}{s}N+\frac{\epsilon N}{s}+M\right)^{\top}
        \left(\frac{t}{s}N'+\frac{\epsilon N'}{s}+M'\right)\\
        &=2st\left(\vec{m}+\frac{(2t,0,\dots ,0)^{\top}}{2st}\right)^{\top}
        \left(\vec{m}'+\frac{\vec{r}}{2st}\right)\\
        &=2st\cdot\frac{2t}{2st}\cdot\frac{r_1}{2st}\ \mod 1\\
        &=\frac{r_1}{s}\ \mod 1.
    \end{split}
\end{equation}
Of course this result is independent of the choice of $N,M,N',M',\vec{m},\vec{m}'$.
So one has the general monodromy charge $Q_i(\vec{r})=\frac{r_i}{s}$, and
this allows to write the monodromy charge in the vector form $\frac{\vec{r}}{s}$.

Then we have to compute the matrix $R_{ij}$, which is expected to be $\frac{2t}{s}I$.
We start with the $R_{11}$, which can be deduced from the conformal weight of 
$\mathcal{O}_{(2t,0,\dots ,0)}$. This corresponds to $(\alpha_1 ,\dots ,\alpha_n)
=(1,0,\dots ,0)$ in the L.H.S. of eq.\eqref{eq2.24}. 
Since $h(\mathcal{O}_{(2t,0,\dots ,0)})$ is equal to $\frac{\vec{p}_L \cdot\vec{p}_L}{2}$,
the equation \eqref{eq3.17} leads us to $h(\mathcal{O}_{(2t,0,\dots ,0)})=\frac{t}{s}$
where we set $r_1 =2t$ and divide the result by two.
Consequently, we get the equation for $r_{11}$ (Recall that $R_{ij}=\frac{r_{ij}}{p}$.)
\begin{equation}
    \frac{t}{s}=\frac{r_{11}}{2s} +\frac{r_{11}}{2}=\frac{(s+1)r_{11}}{2s}\ \mod 1,
\end{equation}
and one can easily verify the solution of this equation is $r_{11}=2t$.
Thus the all diagonal elements of $R_{ij}$ are given by $\frac{2t}{s}$.
Finally, we complete the proof by showing that the off-diagonal elements of $R_{ij}$ is 
always zero.
To this end, we will determine $R_{12}$ without loss of generality. 
One can verify that $h(\mathcal{O}_{(2t,2t,0,\dots ,0)})=h(\mathcal{O}_{(2t,0,\dots ,0)})
+h(\mathcal{O}_{(0,2t,0,\dots ,0)})=2\cdot\frac{t}{s}$ from the equation \eqref{eq3.17}.
Then the equation \eqref{eq2.24} where $(\alpha_1 ,\dots ,\alpha_n)
=(1,1,0,\dots ,0)$ becomes
\begin{equation}
    \frac{2t}{s}=\frac{(s+1)r_{11}}{2s}+\frac{(s+1)r_{11}}{2s}+R_{12}=\frac{t}{s}\times 2
    +R_{12}\  \mod 1
\end{equation}
as we have used $R_{ii}=\frac{r_{ii}}{s}=\frac{2t}{s}$ and $R_{ij}$ is symmetric.
This implies that $R_{12}=0$, and we can conclude that $R=\frac{2t}{s}I$.
Substituting $\frac{\vec{r}}{s}$ for $Q_i(\vec{r})$ in eq.\eqref{eq2.25} and 
$\frac{t}{s}I+\frac{\epsilon}{s}$ for $X=\frac{R}{2}+e$, we find the equation \eqref{eq3.13}
is equivalent to the twist equation of the simple current of $S$ with discrete torsion 
$\frac{\epsilon}{s}$. 
\end{proof}
With this lemma, we need to show that the sum over all $N,M$ of the equation \eqref{eq3.12}
exactly produces the irreducible character decomposition.
This means that we must exclude the case where 
$\frac{N}{2s}+\frac{\epsilon N}{2st}+\frac{M}{2t}=\frac{N'}{2s}+\frac{\epsilon N'}{2st}+
\frac{M'}{2t}=
\vec{m}+\frac{\vec{l}}{2st}$
for some $(N,M)\neq(N',M')$, and where for $(\vec{m},\vec{l})$ there is no $(N,M)$ such that 
$\frac{N}{2s}+\frac{\epsilon N}{2st}+\frac{M}{2t}=
\vec{m}+\frac{\vec{l}}{2st}$.

\begin{lem}\label{lem3.2}
    Let $\phi$ be a map from $\mathbb{Z}^{2n}$ to $(\frac{1}{2st}\mathbb{Z})^{2n}$ that maps 
    $\begin{pmatrix}
        N \\ M
    \end{pmatrix}$ to
    \begin{equation}
        \begin{pmatrix}
        \vec{m}+\frac{\vec{l}}{2st} \\
        \vec{m}'+\frac{-\vec{l}+2t\vec{a}}{2st}
    \end{pmatrix}
    \end{equation}
    where $(\vec{l},-\vec{a})$ is one of the solutions of the twist equation \eqref{eq3.13}, 
    and
    \begin{equation}
        \begin{split}
            \frac{N}{2s}+\frac{\epsilon N}{2st}+\frac{M}{2t}&=
            \vec{m}+\frac{\vec{l}}{2st}\\
            \frac{N}{2s}-\frac{\epsilon N}{2st}-\frac{M}{2t}&=
            \vec{m}'+\frac{-\vec{l}+2t\vec{a}}{2st}.
        \end{split}
    \end{equation}
    Then $\phi$ is injective, and 
    any $\begin{pmatrix}
        \vec{m}+\frac{\vec{l}}{2st} \\
        \vec{m}'+\frac{-\vec{l}+2t\vec{a}}{2st}
    \end{pmatrix}$
    for $\vec{m},\ \vec{m}'\in\mathbb{Z}^n$
    is always in the image of $\phi$
    if and only if $(\vec{l},-\vec{a})$ is one of the solutions of the twist equation
    \eqref{eq3.13}.
\end{lem}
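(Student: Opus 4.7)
The plan is to address both halves of the lemma by solving the defining linear system of $\phi$ for $(N,M)$ in terms of the data on the codomain side, and reading off the integrality conditions. This turns the lemma into a linear-algebra exercise over $\mathbb{Z}$, with the resulting mod-$s$ congruence then identified with the twist equation \eqref{eq3.13} supplied by Lemma \ref{lem3.1}.

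For injectivity, I would add the two component equations of $\phi(N,M)$ to recover $N/s$, hence $N \in \mathbb{Z}^n$, from the image, and then subtract them to recover $M \in \mathbb{Z}^n$. So $(N,M)$ is uniquely determined by $\phi(N,M)$; in fact $\phi$ extends to a full-rank real linear map on $\mathbb{R}^{2n}$. For the image characterization, fix $\vec{l} \in \mathbb{Z}_{2st}^n$ and $\vec{a} \in \mathbb{Z}_s^n$, take arbitrary $\vec{m},\vec{m}' \in \mathbb{Z}^n$, and try to solve the defining equations for $(N,M)$. Adding them forces $N = s(\vec{m}+\vec{m}') + \vec{a}$, which is automatically integer; subtracting them and clearing denominators by $t$ gives $M = t(\vec{m}-\vec{m}') + (\vec{l} - t\vec{a} - \epsilon N)/s$. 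Substituting the expression for $N$, the contribution $\epsilon s(\vec{m}+\vec{m}')$ is manifestly divisible by $s$, so $M \in \mathbb{Z}^n$ if and only if $\vec{l} - (tI + \epsilon)\vec{a} \equiv 0 \pmod{s}$, which after dividing by $s$ is precisely the twist equation \eqref{eq3.13} at $(\vec{r},\vec{\beta}) = (\vec{l},-\vec{a})$.

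The ``$\Leftarrow$'' direction of the biconditional then follows because the divisibility condition is independent of $(\vec{m},\vec{m}')$, so every choice produces a bona fide preimage; the ``$\Rightarrow$'' direction follows by specializing to $\vec{m}=\vec{m}'=0$, whose preimage can exist only if the congruence holds. The main — albeit mild — obstacle is the bookkeeping around the substitution of $N$ into the formula for $M$: one must verify that the $(\vec{m},\vec{m}')$ dependence cleanly decouples modulo $s$, so that integrality of $M$ is controlled uniformly by $(\vec{l},\vec{a})$ alone. Once that is checked, the identification with the twist equation is immediate from Lemma \ref{lem3.1}.
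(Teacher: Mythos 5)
Your proposal is correct and follows essentially the same route as the paper: adding and subtracting the two defining equations is exactly the paper's explicit inverse matrix $B=\begin{pmatrix} sI & sI \\ tI+\epsilon^{\top} & -tI+\epsilon^{\top}\end{pmatrix}$ applied row by row, with the integrality of the second row reducing to the twist equation. Your handling of the ``only if'' direction by specializing to $\vec{m}=\vec{m}'=0$ is a slightly more explicit version of the paper's appeal to Lemma \ref{lem3.1}, but the argument is the same.
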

\begin{proof}
    The map $\phi$ can be represented by the matrix 
    \begin{equation}
        A=
        \begin{pmatrix}
            \frac{I}{2s}+\frac{\epsilon}{2st} & \frac{I}{2t}\\
            \frac{I}{2s}-\frac{\epsilon}{2st} & -\frac{I}{2t}
        \end{pmatrix}.
    \end{equation}
    To prove the injectivity of $\phi$, there must be a map $\psi$ such that 
    $\psi\circ\phi =\id_{\mathbb{Z}^{2n}}$.
    This can be shown by introducing the inverse of $A$:
    \begin{equation}
        B=
        \begin{pmatrix}
            sI & sI \\
            tI+\epsilon^{\top} & -tI+\epsilon^{\top}
        \end{pmatrix}
    \end{equation}
    can be easily seen to be the inverse of $A$ 
    using that $\epsilon$ is antisymmetric, and this implies the injectivity
    of $\phi$.
    
    Suppose that $(\vec{l},-\vec{a})$ is a solution of the equation \eqref{eq3.13},
    and let $\vec{m},\vec{m}'$ be any integer vectors in $\mathbb{Z}^n$.
    Then, we get
    \begin{equation}
        B\begin{pmatrix}
            \vec{m}+\frac{\vec{l}}{2st} \\
        \vec{m}'+\frac{-\vec{l}+2t\vec{a}}{2st}
        \end{pmatrix}
        =
        \begin{pmatrix}
            s(\vec{m}+\vec{m}')+\vec{a} \\
            t(\vec{m}-\vec{m}')+\epsilon^{\top}(\vec{m}+\vec{m}')
            +\frac{\vec{l}}{s}+\left(\frac{t}{s}I+\frac{\epsilon}{s}\right)(-\vec{a})
        \end{pmatrix},
    \end{equation}
    whose second element of R.H.S. is an integer due to the twist equation.
    Therefore $\begin{pmatrix}
        \vec{m}+\frac{\vec{l}}{2st} \\
    \vec{m}'+\frac{-\vec{l}+2t\vec{a}}{2st}
    \end{pmatrix}$
    is in the image of $\phi$. The inverse can be deduced immediately 
    according to the lemma \ref{lem3.1}.
\end{proof}
As we proved the lemma \ref{lem3.1} and \ref{lem3.2}, 
we are ready to show the proposition \ref{prop3.1} as follows.
\begin{proof}[Proof of Prop.\ref{prop3.1}]
    The partition function of $\mathcal{A}$ is 
    \begin{equation}\label{eq3.25}
        \begin{split}
             \mathcal{Z}_{\mathcal{A}}(\tau,\bar{\tau})
        &=\frac{1}{|\eta(\tau)|^{2n}}
        \sum_{(\vec{p}_L,\vec{p}_R)\in\Lambda}
        q^{\frac{\|\vec{p}_L\|^2}{2}}\bar{q}^{\frac{\|\vec{p}_R\|^2}{2}}\\
        &=\frac{1}{|\eta(\tau)|^{2n}}
        \sum_{(N,M)\in\mathbb{Z}^{2n}}
        q^{st\left\| \frac{N}{2s} +\frac{\epsilon N}{2st} 
        +\frac{M}{2t} \right\|^2}
        \bar{q}^{st\left\| \frac{N}{2s} -\frac{\epsilon N}{2st} 
        -\frac{M}{2t} \right\|^2}.
        \end{split}
    \end{equation}
    In order to decompose the sum over $(N,M)$ into irreducible characters, we 
    consider the two sets $U,V$ such that 
    \begin{equation}
        \begin{split}
            U&=\left\{
                \left(
                    \frac{N}{2s} +\frac{\epsilon N}{2st} +\frac{M}{2t},
                    \frac{N}{2s} -\frac{\epsilon N}{2st} -\frac{M}{2t}
                \right)
                \mid
                (N,M)\in\mathbb{Z}^n \times\mathbb{Z}^n,
            \right\}\\
            V&=\left\{
                \left(
                    \vec{m}+\frac{\vec{l}}{2st},
                    \vec{m}'+\frac{-\vec{l}+2t\vec{a}}{2st}
                \right)    
                \mid
                \vec{m},\vec{m}'\in\mathbb{Z}^n,\ 
                (\vec{l},-\vec{a})\in \text{Sol}
            \right\},
        \end{split}
    \end{equation}
    where $\text{Sol}$ denotes the set of all solutions of the equation \eqref{eq3.13}.
    What the lemma \ref{lem3.1} and \ref{lem3.2} imply is that 
    $U$ and $V$ are identical to each other, not just an one-to-one correspondence.
    Subsequently, the equation \eqref{eq3.25} can be deformed into 
    \begin{equation}
        \frac{1}{|\eta(\tau)|^{2n}}
        \sum_{(\vec{l},\vec{a})\in\text{Sol}}
        \sum_{\vec{m},\vec{m}'\in\mathbb{Z}^n}
        q^{st\left\|\vec{m}+\frac{\vec{l}}{2st}\right\|^2}
        \bar{q}^{st\left\|\vec{m}'+\frac{-\vec{l}-2t\vec{a}}{2st}\right\|^2}.
    \end{equation}
    For convenience, here we have substituted $\vec{a}$ for $-\vec{a}$.
    Referring to the equation \eqref{eq3.10}, we have
    \begin{equation}
        \mathcal{Z}_{\mathcal{A}}(\tau,\bar{\tau})
        =\sum_{(\vec{l},\vec{a})\in\text{Sol}}
        \chi^{(2st)}_{\vec{l}}\bar{\chi}^{(2st)}_{-\vec{l}-2t\vec{a}},
    \end{equation}
    which means that $\mathcal{A}$ is the simple current orbifold of $B=0$ case 
    with discrete torsion $\frac{\epsilon}{p}$.
\end{proof}
The proposition \ref{prop3.1} enables us to easily determine the combination
of characters of primary fields by means of the discrete torsion, 
which can be calculated with $\gamma$ and $B$.
In addition, we can construct an error-correcting code 
from $\mathcal{A}$, which can reproduce the partition function with use of 
the weight enumerator polynomial. This might be the inverse of Construction A, and 
partially the extension of the work of Buican et al. \cite{buican2021}.
Before describing this, we will continue to illustrate how to regard a Narain CFT as
a simple current orbifold model for non-homogeneous cases.
\subsubsection{General Case}\label{sec3.1.2}
Similarly to the homogeneous case, we first calculate the Verlinde algebra for $B=0$ in
the general case.
The momentum vectors $\frac{\|\vec{p}_L\|^2}{2}$, $\frac{\|\vec{p}_R\|^2}{2}$
can be found to 
\begin{equation}
    \begin{split}
        \frac{\|\vec{p}_L\|^2}{2}&=\sum_{i=1}^{n}
        s_i t_i \left( \frac{N_i}{2s_i} +\frac{M_i}{2t_i} \right)^2\\
        \frac{\|\vec{p}_R\|^2}{2}&=\sum_{i=1}^{n}
        s_i t_i \left( \frac{N_i}{2s_i} -\frac{M_i}{2t_i} \right)^2.
    \end{split}
\end{equation}
Then it is obvious that the partition function can be written as 
\begin{equation}\label{eq3.30}
    \begin{split}
        &\mathcal{Z}(\tau ,\bar{\tau})\\
        &=
    \frac{1}{|\eta(\tau)|^2}
    \sum_{\vec{l}=(l_1,\dots ,l_n)\in K}
    \sum_{\substack{(m_1,\dots ,m_n)\in\mathbb{Z}^n\\(m'_1,\dots ,m'_n)\in\mathbb{Z}^n}}
    \prod_{i=1}^{n}q^{s_i t_i \left(m_i+\frac{l_i}{2s_i t_i}\right)^2}
    \bar{q}^{s_i t_i \left(m_i ' +\frac{(2t_i -1)l_i}{2s_i t_i}\right)^2}\\
    &=
    \sum_{\vec{l}\in K}
    \chi ^{(2s\cdot t)}_{\vec{l}}\bar{\chi}^{(2s\cdot t)}_{\vec{l}'},\ 
    l'_i = (2t_i -1)l_i, 
    \end{split}
\end{equation}
where $K=\mathbb{Z}_{2s_1 t_1}\times\cdots\times\mathbb{Z}_{2s_n t_n}$, and 
$\chi^{(2s\cdot t)}_{\vec{l}}=\prod_{i=1}^n \chi^{(2s_i t_i)}_{l_i}$.
Therefore, the Verlinde algebra for $B=0$ is $K$.
Next, we consider the set of simple currents $S'$ generated by 
$\{\mathcal{O}_{(2t_1,0,\dots,0)},\mathcal{O}_{(0,2t_2,0,\dots,0)},\dots,
\mathcal{O}_{(0,\dots,0,2t_n)}\}$, which forms the $\mathbb{Z}_{s_1}\times\dots\times
\mathbb{Z}_{s_n}$ subgroup of $K$.
Now we are going to prove one of our main theorem where the homogeneousness 
in the proposition \ref{prop3.1} is omitted.
\begin{thm}\label{thm3.1}
    Let $\mathcal{A}'$ be a Narain CFT with its compactifying lattice satisfying the 
    assumption \ref{asmp3.1}. 
    If the following equation 
    \begin{equation}\label{eq3.31}
        \left(\frac{\gamma^{\top}B\gamma}{2}\right)_{ij}=\frac{\epsilon _{ij}}{s_i}
    \end{equation}
    holds for some $\epsilon _{ij}\in\mathbb{Z}$, 
    then $\mathcal{A}'$ is identical to the simple current orbifold $S'$ of 
    $B=0$ case \eqref{eq3.30}, whose discrete torsion is 
    $e_{ij}=\frac{\epsilon _{ij}}{s_i}$.
\end{thm}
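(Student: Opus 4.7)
The plan is to mirror the proof of Proposition \ref{prop3.1} step by step, with the scalars $s$ and $t$ upgraded to diagonal matrices $D_s=\diag(s_1,\dots,s_n)$ and $D_t=\diag(t_1,\dots,t_n)$, and then carefully check that the one non-routine ingredient --- the inverse of the linear map underlying Lemma \ref{lem3.2} --- still has integer entries in the inhomogeneous setting. I would organise the proof around inhomogeneous versions of Lemmas \ref{lem3.1} and \ref{lem3.2}.

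First I would expand $\|\vec{p}_L\|^2/2$ and $\|\vec{p}_R\|^2/2$ using \eqref{eq3.6} and the hypothesis \eqref{eq3.31}. Because $\gamma^\top\gamma$ is diagonal, the norms split component-wise, so the analogue of \eqref{eq3.12} reads $\|\vec p_L\|^2/2=\sum_i s_it_i\bigl(\tfrac{N_i}{2s_i}+\tfrac{(\epsilon N)_i}{2s_it_i}+\tfrac{M_i}{2t_i}\bigr)^2$, and similarly with minus signs for $\vec p_R$. Reducing these modulo $1$ gives, per index $i$, a label $l_i\in\mathbb Z_{2s_it_i}$ on the left and $(-l_i+2t_ia_i)$ on the right, with $\vec a=-N$ componentwise reduced mod $D_s$. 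This is the right candidate for the twist data.

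Next I would prove the inhomogeneous Lemma \ref{lem3.1}: the pair $(\vec l,-\vec a)$ solves
\begin{equation*}
\frac{r_i}{s_i}+\sum_j\Bigl(\frac{t_i}{s_i}\delta_{ij}+\frac{\epsilon_{ij}}{s_i}\Bigr)\beta_j\equiv 0\pmod 1,\qquad i=1,\dots,n,
\end{equation*}
and this equation coincides with the twist equation of the simple current orbifold by $S'$ with discrete torsion $e_{ij}=\epsilon_{ij}/s_i$. The first part is an algebraic rewriting exactly as in the homogeneous case. For the identification with the twist equation, the inner product computation of \eqref{eq3.17} generalises to give the monodromy charge $Q_i(\vec r)=r_i/s_i$, and the conformal weight analysis using \eqref{eq2.24} applied to $\mathcal O_{(0,\dots,2t_i,\dots,0)}$ and $\mathcal O_{(\dots,2t_i,\dots,2t_j,\dots)}$ gives $R_{ii}=2t_i/s_i$ and $R_{ij}=0$ for $i\ne j$. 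No new idea is needed here; just careful index bookkeeping.

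The main obstacle is the inhomogeneous Lemma \ref{lem3.2}: showing that the linear map
\begin{equation*}
A=\begin{pmatrix} \tfrac12 D_{1/s}+D_{1/(2s t)}\epsilon & \tfrac12 D_{1/t}\\ \tfrac12 D_{1/s}-D_{1/(2s t)}\epsilon & -\tfrac12 D_{1/t}\end{pmatrix}
\end{equation*}
is injective on $\mathbb Z^{2n}$ and that its image contains precisely those pairs of labels satisfying the twist equation. The candidate inverse, found by the same block ansatz as in the homogeneous proof, is
\begin{equation*}
B=\begin{pmatrix} D_s & D_s\\ D_t+\epsilon^\top & -D_t+\epsilon^\top\end{pmatrix}.
\end{equation*}
Verifying $BA=I$ requires the identity $(D_s+D_s)\,D_{1/(2st)}\epsilon+(D_t+\epsilon^\top-(-D_t+\epsilon^\top))D_{1/(2s)}\cdot 0=0$ in the off-diagonal block, which reduces to the constraint $s_j\epsilon_{ij}+s_i\epsilon_{ji}=0$. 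This is exactly the antisymmetry of $\gamma^\top B\gamma$ translated through \eqref{eq3.31}, so it holds automatically; this is the only place where the asymmetry of the defining relation (scaling by $s_i$ rather than a symmetric factor) has to be handled with care. Granted this, integrality of $B$ is immediate and both directions of the lemma follow as in the homogeneous proof, with the twist equation reappearing in the second block of $B\bigl(\vec m+\vec l/(2s\!\cdot\! t),\vec m'+(-\vec l+2t\vec a)/(2s\!\cdot\! t)\bigr)^\top$.

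Finally I would assemble the partition function: substituting the decomposition into \eqref{eq3.25} and using the combined bijection of Lemma \ref{lem3.2} over all twist solutions, the sum over $(N,M)\in\mathbb Z^{2n}$ reorganises into $\sum_{(\vec l,\vec a)\in\mathrm{Sol}}\chi^{(2s\cdot t)}_{\vec l}\bar\chi^{(2s\cdot t)}_{-\vec l-2t\vec a}$, matching the simple current orbifold of \eqref{eq3.30} by $S'$ with discrete torsion $\epsilon_{ij}/s_i$, as claimed.
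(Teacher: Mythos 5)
Your proposal is correct and follows essentially the same route as the paper: component-wise expansion of the norms, the generalised monodromy charge $Q_i(\vec r)=r_i/s_i$ and $R_{ii}=2t_i/s_i$, the explicit block inverse $\begin{pmatrix} D_s & D_s\\ D_t+\epsilon^\top & -D_t+\epsilon^\top\end{pmatrix}$ establishing the bijection between lattice points and twist-equation solutions, and the reassembly of the partition function. You correctly isolate the one genuinely new point in the inhomogeneous case, namely that the vanishing of the off-diagonal block of $B'A'$ requires $s_j\epsilon_{ij}+s_i\epsilon_{ji}=0$, which is exactly the antisymmetry of $\gamma^\top B\gamma$ under the hypothesis \eqref{eq3.31}.
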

Note that in the non-homogeneous case, the matrix $(\epsilon)_{ij}=\epsilon_{ij}$
is no longer antisymmetric, but $\frac{\epsilon_{ij}}{s_i}$ is, since 
$\frac{\gamma^{\top}B\gamma}{2}$ is antisymmetric.
However, this can be proved in an analogous way to the proof of the 
proposition \ref{prop3.1},
focusing on each component.
\begin{proof}
The explicit form of $\frac{\|\vec{p}_L\|^2}{2}$ and $\frac{\|\vec{p}_R\|^2}{2}$
is 
\begin{equation}\label{eq3.32}
    \begin{split}
        \frac{\|\vec{p}_L\|^2}{2}&=\sum_{i=1}^n 
        s_i t_i\left( \frac{N_i}{2s_i} +\frac{\epsilon_{ij} N_j}{2s_i t_i} 
        +\frac{M_i}{2t_i} \right)^2 \\
        \frac{\|\vec{p}_R\|^2}{2}&=\sum_{i=1}^n
        s_i t_i\left( \frac{N_i}{2s_i} -\frac{\epsilon_{ij} N_j}{2s_i t_i} 
        -\frac{M_i}{2t_i} \right)^2
    \end{split}
\end{equation}
where we abbreviated $\sum_{j}$.
Then if $\vec{p}_L$ has the label $\vec{l}\in K$, $\vec{p}_R$ has 
$-\vec{l}+2\vec{t}\cdot N$ for $\vec{t}=(t_1,\dots,t_n)$.
Besides, the pair $(\vec{l},-N)$ is found to be a solution of 
\begin{equation}\label{eq3.33}
    \frac{r_i}{s_i}+\left(\frac{t_i}{s_i}\delta_{ij}+\frac{\epsilon_{ij}}{s_i}\right)
    \beta_j =0\ \mod 1,\ \ \forall i 
\end{equation}
for $(\vec{r}=(r_1,\dots,r_n),\vec{\beta})$.
A similar calculation to the proof of the lemma \ref{lem3.1} brings us to 
that \eqref{eq3.33} is equivalent to the twist equation of the simple current orbifold 
of $S'$ with discrete torsion $\frac{\epsilon_{ij}}{p_i}$.
This can be seen by referring to the equation \eqref{eq3.17},
\begin{equation}
    \begin{split}
        \vec{p}_L \cdot\vec{p}'_{L}&=
        \left(\frac{I+B}{2}\gamma N +\gamma^{\ast}M\right)^{\top}
        \left(\frac{I+B}{2}\gamma N' +\gamma^{\ast}M'\right)\\
        &=\left( \frac{\gamma ^{\top}\gamma +\gamma^{\top} B\gamma}{2}N+M \right)^{\top}
        (\gamma^{\top}\gamma)^{-1}
        \left( \frac{\gamma ^{\top}\gamma +\gamma^{\top} B\gamma}{2}N'+M' \right)\\
        &=\sum_{i=1}^n \frac{s_i}{2t_i}
        \left(\frac{t_i}{s_i}N_i+\frac{\epsilon_{ij} N_j}{s_i}+M_i\right)
        \left(\frac{t_i}{s_i}N'_i +\frac{\epsilon_{ij} N'_j}{s_i}+M'_i\right).
    \end{split}
\end{equation}
The monodromy charge $Q_i(\vec{r})$ and the matrix $R$ can be found easily from the 
above equation. That is, $Q_i(\vec{r})=\frac{r_i}{s_i}$ and $R_{ij}=\frac{t_i}{s_i}\delta_{ij}$.

The last thing we have to confirm is that
            
                \begin{equation}
                    \begin{split}
                        U'&=\left\{
                            \left(
                        \begin{pmatrix}
                            \frac{N_1}{2s_1} +\frac{\epsilon_{1j} N_j}{2s_1 t_1} +\frac{M_1}{2t_1} \\
                            \vdots \\
                            \frac{N_n}{2s_n} +\frac{\epsilon_{nj} N_j}{2s_n t_n} +\frac{M_n}{2t_n} 
                        \end{pmatrix},
                        \begin{pmatrix}
                            \frac{N_1}{2s_1} -\frac{\epsilon_{1j} N_j}{2s_1 t_1} -\frac{M_1}{2s_1}\\
                            \vdots \\
                            \frac{N_1}{2s_1} -\frac{\epsilon_{1j} N_j}{2s_1 t_1} -\frac{M_1}{2t_1}
                        \end{pmatrix}
                        \right)
                            \mid
                            (N,M)\in\mathbb{Z}^n \times\mathbb{Z}^n,
                        \right\},\\
                        V'&=\left\{
                            \left(
                                \begin{pmatrix}
                                    m_1 +\frac{l_1}{2s_1 t_1}\\
                                    \vdots \\
                                    m_n +\frac{l_n}{2s_n t_n}
                                \end{pmatrix},
                                \begin{pmatrix}
                                    m'_1 +\frac{-l_1 +2t_1 a_1}{2s_1 t_1}\\
                                    \vdots \\
                                    m'_n +\frac{-l_n +2t_n a_n}{2s_n t_n}
                                \end{pmatrix}
                            \right)    
                            \mid
                            \vec{m},\vec{m}'\in\mathbb{Z}^n,\ 
                            (\vec{l},-\vec{a})\in \text{Sol}'
                        \right\}
                    \end{split}
                \end{equation}
are identical as sets. Here we denote by $\text{Sol}'$ the set of 
all solutions of the equation \eqref{eq3.33}.
Consider a map $\phi':\mathbb{Z}^n \times\mathbb{Z}^n \to (\frac{1}{2s_1 t_1}\mathbb{Z}
\times\cdots\times\frac{1}{2s_n t_n}\mathbb{Z})^2$ given by
\begin{equation}
    \begin{pmatrix}
        N \\ M
    \end{pmatrix}
    \mapsto
    \left(
        \begin{pmatrix}
            \frac{N_1}{2s_1} +\frac{\epsilon_{1j} N_j}{2s_1 t_1} +\frac{M_1}{2t_1} \\
            \vdots \\
            \frac{N_n}{2s_n} +\frac{\epsilon_{nj} N_j}{2s_n t_n} +\frac{M_n}{2t_n} 
        \end{pmatrix},
        \begin{pmatrix}
            \frac{N_1}{2s_1} -\frac{\epsilon_{1j} N_j}{2s_1 t_1} -\frac{M_1}{2t_1}\\
            \vdots \\
            \frac{N_1}{2s_1} -\frac{\epsilon_{1j} N_j}{2s_1 t_1} -\frac{M_1}{2t_1}
        \end{pmatrix}
    \right).
\end{equation}
This map can be regarded as a matrix 
$A'=\begin{pmatrix}
        A_1 & A_2 \\ A_3 & A_4
    \end{pmatrix}$
which acts on $\begin{pmatrix}
    N\\M
\end{pmatrix}$,
where
\begin{equation}
    \begin{split}
    (A_1)_{ij}&=\frac{\delta_{ij}}{2s_i} +\frac{\epsilon_{ij}}{2s_i t_i}
    ,\ 
    (A_2)_{ij}=\frac{\delta_{ij}}{2t_i},
    \\
    (A_3)_{ij}&=\frac{\delta_{ij}}{2s_i} -\frac{\epsilon_{ij}}{2s_i t_i}
    ,\ 
    (A_4)_{ij}=-\frac{\delta_{ij}}{2t_i}.
    \end{split}
\end{equation}
Comparably to the lemma \ref{lem3.2}, one can find the inverse matrix of $A'$ which
we denote by 
$B'=\begin{pmatrix}
    B_1 & B_2 \\ B_3 & B_4
\end{pmatrix}$
where 
\begin{equation}
    \begin{split}
        (B_1)_{ij}=s_i \delta_{ij}
        &,\ 
        (B_2)_{ij}=s_i \delta_{ij},
        \\
        (B_3)_{ij}=s_i \delta_{ij}+\epsilon_{ji}
        &,\ 
        (B_4)_{ij}=-t_i \delta_{ij}+\epsilon_{ji}.
        \end{split}
\end{equation}
Then it is easy to see that $\phi'$ is injective, and that 
$$\left(
    \begin{pmatrix}
        m_1 +\frac{l_1}{2s_1 t_1}\\
        \vdots \\
        m_n +\frac{l_n}{2s_n t_n}
    \end{pmatrix},
    \begin{pmatrix}
        m'_1 +\frac{-l_1 +2t_1 a_1}{2s_1 t_1}\\
        \vdots \\
        m'_n +\frac{-l_n +2t_n a_n}{2s_n t_n}
    \end{pmatrix}
\right)$$
 for $\vec{m},\ \vec{m}'\in\mathbb{Z}^n$ is always in the image of 
$\phi'$ if and only if $(\vec{l},-\vec{a})$ satisfy the equation \eqref{eq3.33}. 
Therefore, the partition function of $\mathcal{A}'$ becomes 
\begin{equation}\label{eq3.39}
    \mathcal{Z}_{\mathcal{A}'}(\tau,\bar{\tau})=
    \sum_{(\vec{l},\vec{a})\in\text{Sol}'}
    \chi^{(2s\cdot t)}_{\vec{l}}\bar{\chi}^{(2s\cdot t)}_{-\vec{l}-2\vec{t}\cdot\vec{a}},
\end{equation}
which completes the proof. 
\end{proof}
Then the theorem \ref{thm3.1} gives the sufficient condition for
a Narain CFT to be rational. In addition, the relation between the antisymmetric
B-field and the discrete torsion of the simple current orbifold is 
provided.
Moreover, in the next subsection we will show that these Narain RCFTs have ``code-like'' structures 
in general.
\subsection{Code Construction}\label{sec3.2}
It has been argued that the partition function of code CFT can be 
described with the enumerator polynomial of the code whose variables 
are substituted for certain $q$-series. Let us put a code $\mathcal{C}$ over $\mathbb{F}_p$ 
with length $2n$ as 
an example. 
Each codeword $c\in\mathcal{C}$ can be represented in terms of $n$-elements vectors 
$\alpha,\ \beta\in\mathbb{F}_p ^n$ such that 
$c=(\alpha ,\beta)$.
The enumerator polynomial $W_{\mathcal{C}}(\{x_{ab}\})$ is defined as \cite{nebe2006a}
\begin{equation}\label{eq3.40}
    W_{\mathcal{C}}(\{x_{ab}\})=\sum_{c\in\mathcal{C}}\prod_{(a,b)\in\mathbb{F}_p \times\mathbb{F}_p}
    x_{ab}^{\text{wt}_{ab}(c)}
\end{equation}
where 
\begin{equation}
    \text{wt}_{ab}(c)=\sharp\{i\mid (\alpha_i,\beta_i)=(a,b)\}.
\end{equation}
Then the partition function of the code CFT associated with $\mathcal{C}$ can be written as 
\cite{kawabata2022}
\begin{equation}
    \mathcal{Z}_{\mathcal{C}}(\tau,\bar{\tau})
    =\frac{1}{|\eta(\tau)|^{2n}}W_{\mathcal{C}}(\{\psi_{ab}\}),
\end{equation}
the variables of $W_{\mathcal{C}}$ being replaced by 
\begin{equation}
    \psi^{(p)}_{ab}=\sum_{k_1 ,k_2\in\mathbb{Z}}
    q^{\frac{p}{4}\left(k_1+k_2+\frac{a+b}{p}\right)^2}
    \bar{q}^{\frac{p}{4}\left(k_1-k_2+\frac{a-b}{p}\right)^2}.
\end{equation}
Of course, this is consistent with the $p=2$ case \cite{dymarsky2020a}, where 
the code is viewed as a quantum stabilizer code.
The second main result in our paper states that the partition function 
\eqref{eq3.39} has the structure of the enumerator polynomial of a code-like 
object, and we will show this in the following.
\subsubsection{Construction}\label{sec3.2.1}
Similarly to the section \ref{sec3.1}, we start with the homogeneous case, 
and we set $st>1$.
Recall that the irreducible character with the label $\vec{l}$ is 
\begin{equation}
    \chi^{(2st)}_{\vec{l}}=\prod_{i=1}^{n}\chi^{(2st)}_{l_i}
    =\frac{1}{|\eta(\tau)|^{2n}}
    \prod_{i=1}^{n}\sum_{m_i\in\mathbb{Z}}q^{st\left(m_i+\frac{l_i}{2st}\right)^2}.
\end{equation}
We are going to show that combining these characters appropriately produces 
the structure of a enumerator polynomial.
By $\varphi^{(d)}_{ij}$ we denote the $q$-series 
\begin{equation}
    \sum_{k_1,k_2\in\mathbb{Z}}q^{\frac{d}{4}\left(k_1+k_2+\frac{i}{d}\right)^2}
    \bar{q}^{\frac{d}{4}\left(k_1-k_2+\frac{j}{d}\right)}.
\end{equation}
Then one finds the following lemma:
\begin{lem}\label{lem3.3}
    The following equation holds:
\begin{equation}\label{eq3.46}
    \sum_{\substack{\vec{u}=(u_1,\dots,u_n)\\u_i\in\{0,st\}}}
    \chi^{(2st)}_{\vec{l}+\vec{u}}\bar{\chi}^{(2st)}_{\vec{l}'+\vec{u}}
    =\frac{1}{|\eta(\tau)|^{2n}}\prod_{i=1}^{n}\varphi^{(st)}_{l_i l'_i}.
\end{equation}

\end{lem}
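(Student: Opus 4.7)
The plan is to reduce the lemma to a one-dimensional identity via the product structure, and then to match the two expressions through a parity-of-integers change of variables.

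First I would observe that both sides factorize over the index $i$: since $\chi^{(2st)}_{\vec l+\vec u}\bar\chi^{(2st)}_{\vec l'+\vec u} = \prod_i \chi^{(2st)}_{l_i+u_i}\bar\chi^{(2st)}_{l'_i+u_i}$ and the sum over $\vec u\in\{0,st\}^n$ splits into independent sums over each $u_i$, the claim reduces to the one-dimensional identity
\begin{equation*}
\sum_{u\in\{0,st\}} \chi^{(2st)}_{l+u}\,\bar\chi^{(2st)}_{l'+u}
= \frac{1}{|\eta(\tau)|^{2}}\,\varphi^{(st)}_{l\,l'}.
\end{equation*}

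Next I would expand the left-hand side using the explicit character formula from \eqref{eq3.10}, so that the sum becomes $\frac{1}{|\eta|^2}\sum_{u\in\{0,st\}}\sum_{m,m'\in\mathbb{Z}} q^{st(m+(l+u)/(2st))^2}\bar q^{st(m'+(l'+u)/(2st))^2}$. The key step is to rewrite the shift: $u/(2st)$ equals $0$ or $1/2$, so for $u=0$ the argument $m+l/(2st)$ has the form $K/2+l/(2st)$ with $K=2m$ even, while for $u=st$ it has the form $K/2+l/(2st)$ with $K=2m+1$ odd. Performing the same substitution on the antiholomorphic side with the same $u$, I would note crucially that the parity of $K$ equals the parity of $K'$, since both are determined by the single variable $u$.

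On the right-hand side I would repackage $\varphi^{(st)}_{l\,l'}$ via the change of summation variables $K=k_1+k_2$, $K'=k_1-k_2$. The map $(k_1,k_2)\mapsto(K,K')$ is a bijection from $\mathbb{Z}^2$ onto the sublattice $\{(K,K')\in\mathbb{Z}^2 : K\equiv K'\!\!\pmod 2\}$, and the exponents become $(st/4)(K+l/(st))^2 = st(K/2+l/(2st))^2$ and analogously for $\bar q$. This puts $\varphi^{(st)}_{l\,l'}$ in exactly the same form obtained from the left-hand side after the parity split.

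The main obstacle is purely bookkeeping: making sure the parity constraint on $(K,K')$ produced by the $k_1,k_2$ change of variables on the $\varphi$ side matches the coupled parity produced by the shared variable $u$ on the character side. Once this correspondence is set up cleanly, the identity is immediate and the lemma follows by taking the $n$-fold product.
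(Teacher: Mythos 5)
Your proof is correct and follows essentially the same route as the paper's: the paper likewise reduces to the $n=1$ identity (factorizing at the end rather than at the start), splits the two shifts $u\in\{0,st\}$ into even and odd values of $K=2m$ or $2m+1$ with the parities of the holomorphic and antiholomorphic sides locked together, and matches this against $\varphi^{(st)}_{l\,l'}$ via the substitution $(k_1,k_2)\mapsto(k_1+k_2,\,k_1-k_2)$ onto same-parity pairs. No gaps; the bookkeeping you flag is exactly what the paper's displayed computation carries out.
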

\begin{proof}
    This can be shown by a simple calculation.
    First, consider the $n=1$ case.
\begin{equation}\label{eq3.47}
    \begin{split}
        \chi^{(2st)}_{l}\bar{\chi}^{(2st)}_{l'}
        +\chi^{(2st)}_{l+st}\bar{\chi}^{(2st)}_{l'+st}
        &=\frac{1}{|\eta(\tau)|^{2}}
        \sum_{m_1,m_2\in\mathbb{Z}}q^{st\left(m_1+\frac{l}{2st}\right)^2}
        \bar{q}^{st\left(m_2+\frac{l'}{2st}\right)^2}
        +\\
        &\frac{1}{|\eta(\tau)|^{2}}
        \sum_{m'_1,m'_2\in\mathbb{Z}}q^{st\left(m'_1+\frac{l+st}{2st}\right)^2}
        \bar{q}^{st\left(m'_2+\frac{l'+st}{2st}\right)^2}\\
        &=\frac{1}{|\eta(\tau)|^{2}} 
        \sum_{m_1,m_2\in\mathbb{Z}}q^{\frac{st}{4}\left(2m_1+\frac{l}{st}\right)^2}
        \bar{q}^{\frac{st}{4}\left(2m_2+\frac{l'}{st}\right)^2}
        +\\
        &\frac{1}{|\eta(\tau)|^{2}}
        \sum_{m'_1,m'_2\in\mathbb{Z}}q^{\frac{st}{4}\left(2m'_1 +1+\frac{l}{st}\right)^2}
        \bar{q}^{\frac{st}{4}\left(2m'_2 +1+\frac{l'}{st}\right)^2}\\
        &=\frac{1}{|\eta(\tau)|^{2}}\sum_{(m_1,m_2):\text{same parity}}
        q^{\frac{st}{4}\left(m_1+\frac{l}{st}\right)^2}
        \bar{q}^{\frac{st}{4}\left(m_2+\frac{l'}{st}\right)}\\
        &=\frac{1}{|\eta(\tau)|^{2}}\sum_{k_1,k_2\in\mathbb{Z}}
        q^{\frac{st}{4}\left(k_1+k_2+\frac{l}{st}\right)^2}
        \bar{q}^{\frac{st}{4}\left(k_1-k_2+\frac{l'}{st}\right)}\\
        &=\frac{1}{|\eta(\tau)|^{2}}\varphi^{(st)}_{l, l'}.
    \end{split}
\end{equation}
Then the left hand side of the equation \eqref{eq3.46} can be 
factorized into 
\begin{equation}
    \left(\chi^{(2st)}_{l_1}\bar{\chi}^{(2st)}_{l'_1}+
    \chi^{(2st)}_{l_1+st}\bar{\chi}^{(2st)}_{l'_1+st}\right)
    \cdots
    \left(
        \chi^{(2st)}_{l_n}\bar{\chi}^{(2st)}_{l'_n}+
    \chi^{(2st)}_{l_n+st}\bar{\chi}^{(2st)}_{l'_n+st}
    \right),
\end{equation}
which equals to the right hand side of the equation \eqref{eq3.46} due to 
the equation \eqref{eq3.47}.
\end{proof}
In order to use this we need the following lemma, but it is 
self-evident by the equation \eqref{eq3.13}.
\begin{lem}\label{lem3.4}
    Let $\mathcal{A}$ be a Narain CFT that fulfill the assumption of 
    the proposition \ref{prop3.1}.
    Then if $(\vec{l},\vec{a})\in\text{Sol}$, 
    \begin{equation}
        \{(\vec{l}+\vec{u},\vec{a})\mid \vec{u}=(u_1,\dots,u_n),\ u_i\in\{0,st\}\}
        \subset\text{Sol}.
    \end{equation}
\end{lem}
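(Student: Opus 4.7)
The plan is to verify the claim directly by substitution into the twist equation \eqref{eq3.13}. Recall that $(\vec{r},\vec{\beta})\in\text{Sol}$ means
\begin{equation*}
\frac{\vec{r}}{s}+\left(\frac{t}{s}I+\frac{\epsilon}{s}\right)\vec{\beta}\equiv 0\ \mod 1.
\end{equation*}
So starting from the hypothesis that $(\vec{l},\vec{a})$ satisfies this congruence, I would plug in $\vec{r}=\vec{l}+\vec{u}$ (keeping $\vec{\beta}=\vec{a}$) and show that the extra contribution $\vec{u}/s$ is an integer vector, hence vanishes modulo $1$.

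Concretely, first I would split the left-hand side as
\begin{equation*}
\frac{\vec{l}+\vec{u}}{s}+\left(\frac{t}{s}I+\frac{\epsilon}{s}\right)\vec{a}
=\underbrace{\frac{\vec{l}}{s}+\left(\frac{t}{s}I+\frac{\epsilon}{s}\right)\vec{a}}_{\equiv 0 \bmod 1 \text{ by assumption}}
+\frac{\vec{u}}{s}.
\end{equation*}
Next I would observe that each entry of $\vec{u}/s$ is either $0/s=0$ or $st/s=t$, both of which are integers. Therefore $\vec{u}/s\in\mathbb{Z}^n$, so it drops out modulo $1$ and the sum remains congruent to zero. This shows $(\vec{l}+\vec{u},\vec{a})\in\text{Sol}$ for every admissible $\vec{u}$.

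There is essentially no obstacle here: the lemma is a one-line verification that the shift $\vec{u}$ lives in $s\mathbb{Z}^n$, which is precisely the kernel of reduction modulo $1$ when divided by $s$. The only thing worth noting is that $\vec{a}$ is untouched, so the second term on the left is unchanged and does not need re-examination. This lemma will then combine with Lemma \ref{lem3.3} to allow the partition function \eqref{eq3.39} to be repackaged into the $\varphi^{(st)}_{l_i l'_i}$ building blocks that produce the enumerator-polynomial structure.
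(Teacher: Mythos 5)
Your proof is correct and is exactly the argument the paper intends: the paper gives no proof at all, declaring the lemma ``self-evident by the equation \eqref{eq3.13}'', and your one-line check that each entry of $\vec{u}/s$ is $0$ or $st/s=t\in\mathbb{Z}$, so the shift vanishes modulo $1$ while the $\vec{a}$-term is untouched, is precisely the verification being elided.
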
 
We also need to define the weight enumerator polynomial of a additive subgroup of 
$(\mathbb{Z}_d)^n \times(\mathbb{Z}_d)^n$.
\begin{dfn}
Let $\mathcal{C}$ be an additive subgroup of $(\mathbb{Z}_d)^n \times(\mathbb{Z}_d)^n$ and 
suppose that each $c\in\mathcal{C}$ can be written as $c=(\alpha,\beta)$ with 
$\alpha,\beta\in\mathbb{Z}_d^n$.
Then the enumerator polynomial of $\mathcal{C}$ is defined as follows.
\begin{equation}
    W_{\mathcal{C}}(\{x_{ab}\})=\sum_{c\in\mathcal{C}}\prod_{(a,b)\in\mathbb{Z}_d \times\mathbb{Z}_d}
    x_{ab}^{\text{wt}_{ab}(c)}
\end{equation}
where 
\begin{equation}
    \text{wt}_{ab}(c)=\sharp\{i\mid(a,b)=c_i\}.
\end{equation}
\end{dfn}
Finally, we are ready to construct the code from primary fields,
which is described in this theorem.
\begin{prop}\label{prop3.2}
    Following the notation of \cite{buican2021},
    we define the map $\mu$ from the set of primary fields of $\mathcal{A}$ to 
    $(\mathbb{Z}_{st})^n \times(\mathbb{Z}_{st})^n$ as
    \begin{equation}
        \mu :\mathcal{O}_{\vec{l},-\vec{l}-2t\vec{a}}\mapsto
        (-t\vec{a},\vec{l}+t\vec{a})\ \mod st.
    \end{equation}
    
    Let $\mathcal{C}(\mathcal{A})$ be the image of $\mu$.
    Then the partition function of $\mathcal{A}$ can be written as 
    \begin{equation}
        \mathcal{Z}_{\mathcal{A}}(\tau,\bar{\tau})=\frac{1}{|\eta(\tau)|^{2n}}
        W_{\mathcal{C}(\mathcal{A})}(\{\psi^{(st)}_{ab}\})
    \end{equation}
\end{prop}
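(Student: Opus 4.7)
The plan is to organise the orbifold character decomposition of $\mathcal{Z}_\mathcal{A}$ so that its chiral/antichiral pairs assemble into the $\varphi$-building blocks of Lemma \ref{lem3.3}, and then to identify the resulting combinatorial sum as the weight enumerator polynomial of $\mathcal{C}(\mathcal{A})$. Concretely, I start from
\begin{equation*}
\mathcal{Z}_\mathcal{A}(\tau,\bar\tau)=\sum_{(\vec{l},\vec{a})\in\text{Sol}}\chi^{(2st)}_{\vec{l}}\bar\chi^{(2st)}_{-\vec{l}-2t\vec{a}}
\end{equation*}
supplied by Proposition \ref{prop3.1}, and partition $\text{Sol}$ into equivalence classes under $(\vec{l},\vec{a})\sim(\vec{l}+\vec{u},\vec{a})$ for $\vec{u}\in\{0,st\}^n$. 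Lemma \ref{lem3.4} guarantees that the full $2^n$-element class sits inside $\text{Sol}$, so no terms of the sum are spuriously introduced or lost.

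Next, I would exploit the fact that this shift acts the same way on the right label: since $-(\vec{l}+\vec{u})-2t\vec{a}\equiv(-\vec{l}-2t\vec{a})+\vec{u}\pmod{2st}$ componentwise, Lemma \ref{lem3.3} collapses the $2^n$ character products in each class into a single product $\prod_{i=1}^n\varphi^{(st)}_{l_i,\,l'_i}$ with $l'_i=-l_i-2ta_i$. Setting $\alpha_i=-ta_i$ and $\beta_i=l_i+ta_i$ modulo $st$, one has $\alpha_i+\beta_i=l_i$ and $\alpha_i-\beta_i=-l_i-2ta_i=l'_i$, and comparing the defining $q$-series gives $\varphi^{(st)}_{l_i,l'_i}=\psi^{(st)}_{\alpha_i,\beta_i}$. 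This is precisely the content of the map $\mu$, so each class contributes $\prod_i\psi^{(st)}_{\mu(\mathcal{O}_{\vec{l},-\vec{l}-2t\vec{a}})_i}$.

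The main bookkeeping step — and the part where I expect the real work to lie — is to verify that $\mu$ descends to a bijection between $\text{Sol}/\!\!\sim$ and $\mathcal{C}(\mathcal{A})$, so that summing over the image exactly reproduces the sum over classes without double-counting. Well-definedness on the quotient is immediate, since shifting $\vec{l}$ by $\vec{u}\in\{0,st\}^n$ leaves $(-t\vec{a},\vec{l}+t\vec{a})$ unchanged modulo $st$. For injectivity I would argue that $-t\vec{a}\equiv-t\vec{a}'\pmod{st}$ forces $\vec{a}=\vec{a}'$ in $\mathbb{Z}_s^n$, after which $\vec{l}+t\vec{a}\equiv\vec{l}'+t\vec{a}'\pmod{st}$ forces $\vec{l}$ and $\vec{l}'$ to differ by an element of $\{0,st\}^n$, which is exactly the equivalence relation.

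Finally, once the bijection is in place, I would collect identical factors $\psi^{(st)}_{ab}$ in the product $\prod_i\psi^{(st)}_{\alpha_i,\beta_i}$ by their multiplicities $\mathrm{wt}_{ab}(c)$ and read off
\begin{equation*}
\mathcal{Z}_\mathcal{A}(\tau,\bar\tau)=\frac{1}{|\eta(\tau)|^{2n}}\sum_{c\in\mathcal{C}(\mathcal{A})}\prod_{(a,b)\in\mathbb{Z}_{st}\times\mathbb{Z}_{st}}\bigl(\psi^{(st)}_{ab}\bigr)^{\mathrm{wt}_{ab}(c)}=\frac{1}{|\eta(\tau)|^{2n}}W_{\mathcal{C}(\mathcal{A})}(\{\psi^{(st)}_{ab}\}),
\end{equation*}
which is the claimed identity. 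The overall difficulty is less in any single manipulation than in keeping the three layers of indices (primary-field labels, orbifold parameters, and code coordinates) aligned; all three are linear changes of variables modulo $st$ and $2st$, but the consistency with Lemmas \ref{lem3.3} and \ref{lem3.4} has to be tracked carefully.
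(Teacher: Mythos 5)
Your proposal is correct and follows essentially the same route as the paper's proof: partition $\text{Sol}$ into $\{0,st\}^n$-shift classes via Lemma \ref{lem3.4}, collapse each class into $\prod_i\varphi^{(st)}_{l_i,-l_i-2ta_i}$ via Lemma \ref{lem3.3}, convert to $\psi^{(st)}_{-ta_i,\,l_i+ta_i}$, and read off the weight enumerator. Your explicit check that $\mu$ induces a bijection $\text{Sol}/\!\sim\;\to\mathcal{C}(\mathcal{A})$ is a detail the paper leaves implicit, but it is the same argument.
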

\begin{proof}
    A simple calculation leads us to 
    \begin{equation}
        \varphi^{(st)}_{i,-i-2tj}=\psi^{(st)}_{-tj,i+tj}.
    \end{equation}
    We introduce the equivalence relation in $\text{Sol}$ as 
    \begin{equation}
        (\vec{l},\vec{a})\sim(\vec{l}',\vec{a})\Leftrightarrow
        \vec{l}-\vec{l}'\in\{0,\pm st\}^n.
    \end{equation}
    Then according to the lemma \ref{lem3.3},\ref{lem3.4}, one has
    \begin{equation}
        \begin{split}
            \mathcal{Z}_{\mathcal{A}}(\tau,\bar{\tau})&=
            \sum_{(\vec{l},\vec{a})\in\text{Sol}}
            \chi^{(2st)}_{\vec{l}}\bar{\chi}^{(2st)}_{-\vec{l}-2t\vec{a}}\\
            &=\sum_{(\vec{l},\vec{a})\in\text{Sol}/\sim}
            \sum_{\substack{\vec{u}=(u_1,\dots,u_n)\\u_i\in\{0,st\}}}
            \chi^{(2st)}_{\vec{l}+\vec{u}}\bar{\chi}^{(2st)}_{-\vec{l}-2t\vec{a}+\vec{u}}\\
            &=\frac{1}{|\eta(\tau)|^{2n}}
            \sum_{(\vec{l},\vec{a})\in\text{Sol}/\sim}
            \prod_{i=1}^{n}\varphi^{(st)}_{l_i, -l_i-2ta_i}\\
            &=\frac{1}{|\eta(\tau)|^{2n}}
            \sum_{(\vec{l},\vec{a})\in\text{Sol}/\sim}
            \prod_{i=1}^{n}\psi^{(st)}_{-ta_i,l_i+ta_i}\\
            &=\frac{1}{|\eta(\tau)|^{2n}}
            \sum_{(\vec{l},\vec{a})\in\text{Sol}/\sim}
            \prod_{(a,b)\in\mathbb{Z}_{st}\times\mathbb{Z}_{st}}
            (\psi_{ab}^{(st)})^{\text{wt}_{ab}(-t\vec{a},\vec{l}+t\vec{a})}\\
            &=\frac{1}{|\eta(\tau)|^{2n}}
            \sum_{c\in\mathcal{C}(\mathcal{A})}
            \prod_{(a,b)\in\mathbb{Z}_{st}\times\mathbb{Z}_{st}}
            (\psi_{ab}^{(st)})^{\text{wt}_{ab}(c)},
        \end{split}
    \end{equation}
    and the last term is equal to $\frac{1}{|\eta(\tau)|^{2n}}
    W_{\mathcal{C}(\mathcal{A})}(\{\psi^{(st)}_{ab}\})$.
\end{proof}
It is obvious that for the case $s=2,t=1$ the definition of $\mu$ reduces to
the one of \cite{buican2021}. 
Before discussing some examples, we should verify that this code construction can be done 
even for the non-homogeneous case.
In this case, the code structure has different dimensions on each bit.
Therefore we will call it a code 
an additive subgroup of $(\mathbb{Z}_{d_1}\times\dots\times\mathbb{Z}_{d_n})^2$.

\begin{dfn}
    Let $\mathcal{C}'$ be an additive subgroup of 
    $(\mathbb{Z}_{d_1}\times\dots\times\mathbb{Z}_{d_n})^2$, and suppose that 
    each $c\in\mathcal{C}'$ can be written as $c=(\alpha,\beta)$ with 
    $\alpha,\beta\in\mathbb{Z}_{d_1}\times\dots\times\mathbb{Z}_{d_n}$.
    Then the enumerator polynomial of $\mathcal{C}'$ is defined as 
    \begin{equation}
        W_{\mathcal{C}'}(\{x_{1,a_1 b_1}\},\dots,\{x_{n,a_n b_n}\})=\sum_{c\in\mathcal{C}'}
        \prod_{i=1}^{n}\prod_{(a_i,b_i)\in\mathbb{Z}_{d_i}\times\mathbb{Z}_{d_i}}
        x_{i,a_i b_i}^{\iota_{i,a_i b_i}(c)}
    \end{equation}
    where 
    \begin{equation}
        \iota_{i,a_i b_i}(c)=
        \begin{cases}
            1\ (\alpha_i,\beta_i)=(a_i ,b_i)\in\mathbb{Z}_{d_i}\times\mathbb{Z}_{d_i} \\
            0\ \text{otherwise}
        \end{cases}.
    \end{equation}

\end{dfn}
Extending the lemma \ref{lem3.3} to this case yields
\begin{equation}\label{eq3.59}
    \sum_{\substack{\vec{u}=(u_1,\dots,u_n)\\u_i\in\{0,s_i t_i\}}}
    \chi^{(2s\cdot t)}_{\vec{l}+\vec{u}}\bar{\chi}^{(2s\cdot t)}_{\vec{l}'+\vec{u}}
    =\frac{1}{|\eta(\tau)|^{2n}}\prod_{i=1}^{n}\varphi^{(s_i t_i)}_{l_i l'_i}.
\end{equation}
Then with analogous argument of the homogeneous case, one finds the following theorem.
\begin{thm}
    Let $\mathcal{A}'$ be a Narain CFT satisfying the assumption of the proposition 
    \ref{thm3.1}.
    We define the map $\mu '$ from the set of primary fields of $\mathcal{A}'$ to 
    $(\mathbb{Z}_{s_1 t_1}\times\cdots\times\mathbb{Z}_{s_n t_n})^2$ as
    \begin{equation}
        \mu' :\mathcal{O}_{\vec{l},-\vec{l}-2\vec{t}\cdot\vec{a}}\mapsto
        (-t_1 a_1,\dots,-t_n a_n,l_1+t_1 a_1,\dots,l_n+t_n a_n).
    \end{equation}
    where each $i$-th and $i+st$-th component is taken by $\mod s_i t_i$.
    Let $\mathcal{C}(\mathcal{A}')$ be the image of $\mu'$.
    Then the partition function of $\mathcal{A}'$ can be written as 
    \begin{equation}
        \mathcal{Z}_{\mathcal{A}'}(\tau,\bar{\tau})=\frac{1}{|\eta(\tau)|^{2n}}
        W_{\mathcal{C}(\mathcal{A}')}
        (\{\psi^{(s_1 t_1)}_{a_1 b_1}\},\dots,\{\psi^{(s_n t_n)}_{a_n b_n}\}).
    \end{equation}
\end{thm}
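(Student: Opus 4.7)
The plan is to mimic the proof of Proposition \ref{prop3.2} line by line, but tracking the extra index dependence through $s_i,t_i$. The starting point is the character decomposition of $\mathcal{Z}_{\mathcal{A}'}$ established in Theorem \ref{thm3.1}, namely equation \eqref{eq3.39}. From here the job is to repackage that sum over $\mathrm{Sol}'$ into the enumerator polynomial $W_{\mathcal{C}(\mathcal{A}')}$ evaluated at the indicated $q$-series.

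First I would verify the non-homogeneous analogue of Lemma \ref{lem3.4}: if $(\vec l,\vec a)\in\mathrm{Sol}'$ then every shift $(\vec l+\vec u,\vec a)$ with $u_i\in\{0,s_it_i\}$ still solves \eqref{eq3.33}. This is immediate because $s_i t_i/s_i=t_i\in\mathbb{Z}$, so shifting $l_i$ by $s_it_i$ only changes the left hand side by an integer. Next I would introduce on $\mathrm{Sol}'$ the equivalence relation $(\vec l,\vec a)\sim(\vec l',\vec a)$ iff $l_i-l'_i\in\{0,\pm s_it_i\}$ for every $i$, partition the sum \eqref{eq3.39} into equivalence classes, and apply the per-component identity \eqref{eq3.59} to each class to pass from $\chi\bar\chi$-pairs to the $\varphi^{(s_it_i)}_{l_il_i'}$ factors. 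The calculation is identical to the homogeneous one except that each factor in the product carries its own modulus $s_it_i$.

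After this, the identity $\varphi^{(d)}_{i,-i-2tj}=\psi^{(d)}_{-tj,i+tj}$ used in the proof of Proposition \ref{prop3.2} can be invoked component-wise with $d=s_it_i$ and $t=t_i$, turning
\[
\prod_{i=1}^{n}\varphi^{(s_it_i)}_{l_i,-l_i-2t_ia_i}\quad\text{into}\quad\prod_{i=1}^{n}\psi^{(s_it_i)}_{-t_ia_i,\,l_i+t_ia_i}.
\]
By definition of $\mu'$, the tuple of exponents $(-t_1a_1,\dots,-t_na_n,l_1+t_1a_1,\dots,l_n+t_na_n)$ is exactly the codeword assigned to $\mathcal{O}_{\vec l,-\vec l-2\vec t\cdot\vec a}$, and the product above is precisely the monomial that this codeword contributes to $W_{\mathcal{C}(\mathcal{A}')}$ in the new multi-variable definition. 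Summing over the equivalence classes of $\mathrm{Sol}'$ then reproduces $W_{\mathcal{C}(\mathcal{A}')}(\{\psi^{(s_it_i)}_{a_ib_i}\})$.

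The one point that requires care, and the only real obstacle, is checking that the equivalence classes in $\mathrm{Sol}'$ are in bijection with the codewords in $\mathcal{C}(\mathcal{A}')=\mathrm{Im}\,\mu'$, so that no codeword is double-counted or missed. Concretely, I must show that $\mu'$ descends to an injection on $\mathrm{Sol}'/\!\sim$: two solutions $(\vec l,\vec a),(\vec l',\vec a')$ give the same image iff $\vec a=\vec a'$ and $l_i-l_i'\in s_it_i\mathbb{Z}$, which after reducing modulo $s_it_i$ in the second block of $\mu'$ is equivalent to $\sim$. Combining this bookkeeping with the character manipulations above yields the claimed equality, and the theorem follows.
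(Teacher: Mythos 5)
Your proposal is correct and coincides with the paper's approach: the paper explicitly omits this proof, stating only that it follows by extending the proof of Proposition \ref{prop3.2} component-wise, which is exactly what you carry out (including the component-wise analogues of Lemmas \ref{lem3.3} and \ref{lem3.4}, the identity $\varphi^{(d)}_{i,-i-2tj}=\psi^{(d)}_{-tj,i+tj}$ with $d=s_it_i$, and the bijection between $\mathrm{Sol}'/\!\sim$ and $\mathcal{C}(\mathcal{A}')$). Your filled-in details, in particular the check that $\mu'$ descends to an injection on equivalence classes, are sound.
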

The proof of this will be skipped, as the proof can be 
proceeded by simply extending the one of the proposition \ref{prop3.2} 
to the component-wise argument.

\subsubsection{Some Examples}\label{sec3.2.2}
Now we are going to indicate some important examples.
By actually checking these examples, we can realize 
how discrete torsion is reflected in code theory. 
This fact motivates us to investigate how the discrete torsion preserves the 
information of the spectral gap, but we do not discuss that far in this paper.

\subsubsection*{$\mathbb{F}_p$ code}
Let $\mathcal{A}$ be a homogeneous Narain CFT with discrete torsion being described as 
the equation \eqref{eq3.11}. 
First, we consider the case $s=p$, $t=1$ where $p$ is a prime number.
We can take $\gamma=\sqrt{\frac{2}{p}}I$ and $B=\epsilon$ with an antisymmetric B-field 
$\epsilon$, for instance.
In this case, the image of $\mu$, which we denote by $\mathcal{C}(\mathcal{A})$, 
is 
\begin{equation}
    \mathcal{C}(\mathcal{A})=\{(-\vec{a},\vec{l}+\vec{a})\mod p\mid(\vec{l},\vec{a})
    \in\text{Sol}\}.
\end{equation}
One can verify that this is a B-form $\mathbb{F}_p$ code.
First, recall that for every $\vec{a}$, there are some $\vec{l}$ such that 
$(\vec{l},\vec{a})\in\text{Sol}$.
Moreover, from the twist equation
\begin{equation}
    \frac{\vec{l}}{p}+\left(\frac{I}{p}+\frac{\epsilon}{p}\right)\vec{a}=0\mod 1,
\end{equation}
one has 
\begin{equation}
    \vec{l}+\vec{a}=-\epsilon \vec{a}\mod p.
\end{equation}
Here note that the vectors $\vec{l},\vec{a}$ are column vectors, and that
in order to compare $\mathcal{C}(\mathcal{A})$ to a $\mathbb{F}_p$ code,
we should take transpose these vectors.
Let $\mathcal{C}(\mathcal{A})^{\top}$ be the set of vectors after 
transposing $(-\vec{a},\vec{l}+\vec{a})\in\mathcal{C}(\mathcal{A})$.
Then each element of $\mathcal{C}(\mathcal{A})^{\top}$ can be written as 
$(-\vec{a}^{\top},-\vec{a}^{\top}\epsilon^{\top})$ for $\vec{a}\in\mathbb{Z}_p^ n$.
On the other hand, a codeword $c$ of a $\mathbb{F}_p$ code $\mathcal{C}(G)$ with length $2n$ and 
dimension $n$ that has the generator matrix $G=
    \begin{pmatrix}
        A & B
    \end{pmatrix}$
is of form $(\vec{j}A,\vec{j}B)$ where $\vec{j}\in\mathbb{Z}_p ^n$ is a 
row vector.
Comparing to $\mathcal{C}(\mathcal{A})^{\top}$, one can deduce that 
$\mathcal{C}(\mathcal{A})^{\top}$ is identical to the code over $\mathbb{F}_p$ with 
the generator matrix
\begin{equation}
    G'=
    \begin{pmatrix}
        -I & -\epsilon^{\top}
    \end{pmatrix}.
\end{equation}
In addition, $G'$ is equivalent to the B-form code 
$
    \begin{pmatrix}
        I & \epsilon^{\top}
    \end{pmatrix}.
$
This is consistent with that the code CFT constructed from the B-form code 
$\begin{pmatrix}
    I & B^{\top}
\end{pmatrix}$
is 
\begin{equation}
    \Lambda=\begin{pmatrix}
        pI & B \\
        0 & I
    \end{pmatrix}/\sqrt{p},
\end{equation}
where $\frac{\gamma^{\top}B\gamma}{2}=\frac{B}{p}$.
For relating the spectral gap to the code quantity, we introduce the squared Lee distance
$d(\mathcal{C})$,
\begin{equation}
d(\mathcal{C})\coloneqq\min_{0\neq c\in\mathcal{C}}\left\{ \frac{1}{p}
\sum_{i=1}^{2n}\min\{c_i^2,(p-c_i)^2\} 
\right\}.
\end{equation} 
Therefore, the spectral gap of $\mathcal{A}$ can be described with $d(\mathcal{C})$
of $\mathcal{C}(G)$ \cite{angelinos2022},
\begin{equation}
    \Delta=\frac{1}{2}\min\{d(\mathcal{C}),p\}.
\end{equation}

\subsubsection*{$\mathbb{Z}_d$ code}
Let us consider the case where $s=d$ is not a prime, but just an integer.
In the above discussion we did not use that $p$ is a prime number, 
so the extension to the general integer case can be easily carried out 
by substituting $p$ with $d$.
Then the equality between the discrete torsion ond the B-form can be proved.
However, the notion of the resulting code is changed because $\mathbb{Z}_d$ for 
non prime $d$ cannot have the structure of a finite field.
It can be resolved by redefining the code as an additive subgroup of 
$\mathbb{Z}_d^n \times\mathbb{Z}_d^n$. 

A remarkable example for $\mathbb{Z}_d$ code is the case where $t\neq 1$.
The simplest case is the $n=1$ case with $\gamma=(\sqrt{3})$,
where $s=2$, $t=3$.
Solving the twist equation, one finds that $K=\mathbb{Z}_{12}$ and that 
the set of primary fields is 
$\{\mathcal{O}_{(i,5i)}\mid i\in\mathbb{Z}_{12}\}$.
Then calculating the image of $\mu$ leads to 
\begin{equation}
    \im \mu =\{(0,0),(3,2),(0,4),(3,0),(0,2),(3,4)\}\subset\mathbb{Z}_6 \times\mathbb{Z}_6,
\end{equation}
which is generated by $(3,2)$ with the addition.
Note that in $n=1$ the antisymmetric B-field vanishes, and so as the discrete torsion.
Therefore, the latter $\mathbb{Z}_6$ component of this code is not related 
to the discrete torsion, in contrast to the $\mathbb{F}_p$ code.
This is because the twist equation is equivalent to an integer equation taken by $\mod s$,
but each component of $\im\mu$ is $\mod st$.
Then an equality in $\mod s$ does not mean an equality in $\mod st$, 
one cannot proceed the argument in the previous example.

\subsubsection*{$\mathbb{Z}_2 \times \mathbb{Z}_6$ code}
The third example is the non-homogeneous case.
We will consider the Narain CFT $\mathcal{A}'$ with 
\begin{equation}
    \gamma=\begin{pmatrix}
        1 & 0 \\ 0 & \frac{1}{3}
    \end{pmatrix},\ 
    B=\begin{pmatrix}
        0 & \sqrt{3} \\ -\sqrt{3} & 0
    \end{pmatrix},
\end{equation}
where $s_1=2,\ t_1=1,\ s_2=6,\ t_2=1$.
Since $\frac{\gamma^{\top}B\gamma}{2}=\begin{pmatrix}
    0 & \frac{1}{2} \\ -\frac{1}{2} & 0
\end{pmatrix}$, the matrix $\epsilon$ is $\begin{pmatrix}
    0 & 1 \\ -3 & 0
\end{pmatrix}$. 
Then the twist equation becomes 
\begin{equation}\label{eq3.73}
    \begin{split}
        \frac{r_1}{2}+\left(\frac{\beta_1}{2}+\frac{\beta_2}{2}\right)&=0\mod 1,\\
        \frac{r_2}{6}+\left(\frac{\beta_2}{6}-\frac{\beta_1}{2}\right)&=0\mod 1,
    \end{split}
\end{equation}
for $(r_1,r_2,\beta_1,\beta_2)\in\mathbb{Z}_4 \times\mathbb{Z}_{12}\times\mathbb{Z}_2 \times\mathbb{Z}_6$,
and this equations yields the solutions as in the table \ref{table2}.
\begin{table}[h]
    \caption{The solutions of eq.\eqref{eq3.73}}
    \centering
    \begin{tabular}{cc}
        \hline
        $\vec{\beta}$&$\vec{r}$\\
        \hline\hline
        $\begin{pmatrix}
            0 \\ 0
        \end{pmatrix},\begin{pmatrix}
            1\\3
        \end{pmatrix}$
        &
        $\begin{pmatrix}
            0 \\ 0
        \end{pmatrix},\begin{pmatrix}
            0\\6
        \end{pmatrix},\begin{pmatrix}
            2\\0
        \end{pmatrix},\begin{pmatrix}
            2\\6
        \end{pmatrix}$ 
        \\
        $\begin{pmatrix}
            0\\1
        \end{pmatrix},\begin{pmatrix}
            1\\4
        \end{pmatrix}$
        &
        $\begin{pmatrix}
            1\\5
        \end{pmatrix},\begin{pmatrix}
            1\\11
        \end{pmatrix},\begin{pmatrix}
            3\\5
        \end{pmatrix},\begin{pmatrix}
            3\\11
        \end{pmatrix}$\\
        $\begin{pmatrix}
            0\\2
        \end{pmatrix},\begin{pmatrix}
            1\\5
        \end{pmatrix}$&
        $\begin{pmatrix}
            0\\4
        \end{pmatrix},\begin{pmatrix}
            0\\10
        \end{pmatrix},\begin{pmatrix}
            2\\4
        \end{pmatrix}\begin{pmatrix}
            2\\10
        \end{pmatrix}
        $\\
        $\begin{pmatrix}
            0\\3
        \end{pmatrix},\begin{pmatrix}
            1\\0
        \end{pmatrix}$&
        $\begin{pmatrix}
            1\\3
        \end{pmatrix},\begin{pmatrix}
            1\\9
        \end{pmatrix},\begin{pmatrix}
            3\\3
        \end{pmatrix},\begin{pmatrix}
            3\\9
        \end{pmatrix}$\\
        $\begin{pmatrix}
            0\\4
        \end{pmatrix},\begin{pmatrix}
            1\\1
        \end{pmatrix}$&
        $\begin{pmatrix}
            0\\2
        \end{pmatrix},\begin{pmatrix}
            0\\8
        \end{pmatrix},\begin{pmatrix}
            2\\2
        \end{pmatrix},\begin{pmatrix}
            2\\8
        \end{pmatrix}$\\
        $\begin{pmatrix}
            0\\5
        \end{pmatrix},\begin{pmatrix}
            1\\2
        \end{pmatrix}$&
        $\begin{pmatrix}
            1\\1
        \end{pmatrix},\begin{pmatrix}
            1\\7
        \end{pmatrix},\begin{pmatrix}
            3\\1
        \end{pmatrix},\begin{pmatrix}
            3\\7
        \end{pmatrix}$\\
        \hline
    \end{tabular}
    \label{table2}
\end{table}

Consequently, $\mathcal{C}(\mathcal{A}')$ is found to consist of the following codewords
\begin{equation*}
    \begin{split}
        &(0,0,0,0),(0,1,1,0),(0,2,0,0),(0,3,1,0)\\
        &(0,4,0,0),(0,5,1,0),(1,0,0,3),(1,1,1,3)\\
        &(1,2,0,3),(1,3,1,3),(1,4,0,3),(1,5,1,3),
    \end{split}
\end{equation*}
which constitute the additive subgroup of 
$\mathbb{Z}_2 \times\mathbb{Z}_6 \times\mathbb{Z}_2 \times\mathbb{Z}_6$. As this 
subgroup is generated by $(1,0,0,3)$ and $(0,1,1,0)$, the generator matrix $G$ of $\mathcal{C}(\mathcal{A}')$
can be expressed as 
\begin{equation}
    G=\begin{pmatrix}
        1&0&0&3\\0&1&1&0
    \end{pmatrix},
\end{equation}
and this is a B-form code.

To close this section, let us discuss the metric of non-homogeneous codes.
We will first recall the $\mathbb{F}_p$ code $\mathcal{C}$. 
The self-duality of $\mathcal{C}$ reflects the self-duality of the lattice constructed 
from $\mathcal{C}$, because 
\begin{equation}
    \begin{split}
     \frac{\alpha +pk}{\sqrt{p}}\cdot\frac{\beta+pl}{\sqrt{p}}
    +\frac{\alpha' +pk'}{\sqrt{p}}\cdot\frac{\beta' +pl'}{\sqrt{p}}&=0\mod 1\\
    \Leftrightarrow
    \alpha\cdot\beta' + \alpha'\cdot\beta &=0 \mod p
    \end{split}
\end{equation} 
where $(\alpha_1,\beta_1),(\alpha_2,\beta_2)\in\mathcal{C}$ and $k,k',l,l'\in\mathbb{Z}^n$.
Then, consider the non-homogeneous case where $\mathcal{C}$ is an additive 
subgroup of $(\mathbb{Z}_{d_1}\times\cdots\times\mathbb{Z}_{d_n})^2$.
Let $(\alpha,\beta)$ be elements of $\mathcal{C}$.
The lattice vector corresponding to $(\alpha,\beta)$ in Construction A can be written as
$(\frac{\alpha_1+d_1 k_1}{\sqrt{d_1}},\dots,\frac{\alpha_n+d_n k_n}{\sqrt{d_n}},
\frac{\beta_1+d_1 k_1'}{\sqrt{d_1}},\dots,\frac{\beta_n+d_n k_n'}{\sqrt{d_n}})$ 
with integers $k_i,k_i'$.
Therefore the self-duality of the Narain lattice implies that for another vector 
$(\alpha',\beta')$,
\begin{equation}
    \begin{split}
        \frac{\alpha_1\beta_1'+\alpha'_1\beta_1 +d_1 K_1}{d_1}+\cdots+
        \frac{\alpha_n\beta'_n+\alpha'_n\beta_n+d_n K_n}{d_n}&=0\mod 1   \\
        \Leftrightarrow\frac{L}{d_1}(\alpha_1\beta_1'+\alpha'_1\beta_1)+\cdots
        +\frac{L}{d_n}(\alpha_n\beta'_n+\alpha'_n\beta_n)&=0\mod L
    \end{split}
\end{equation}
for some integers $K_i$ and $L=\lcm(d_1,\dots,d_n)$.
From this discussion, one finds that defining the metric matrix $g$ as 
\begin{equation}
    g=\begin{pmatrix}
         & & & \frac{L}{d_1} & & \\
         & \text{\huge{0}} & & & \ddots & \\
         & & & & & \frac{L}{d_n} \\
        \frac{L}{d_1} & & & & & \\
         & \ddots & & & \text{\huge{0}} & \\
         & & \frac{L}{d_n} & & & 
    \end{pmatrix},
\end{equation}
the self-duality of $\mathcal{C}$ can be defined as 
$GgG^{\top}=0\mod L$,
in consistent with the homogeneous case.
Moreover, defining the distance $d'(\mathcal{C})$ of $\mathcal{C}$ as
\begin{equation}
    d'(\mathcal{C})\coloneqq \min_{0\neq(\alpha,\beta)\in\mathcal{C}}\left\{
    \sum_{i=1}^n \frac{1}{d_i}(\min\{\alpha_i^2 ,(d_i -\alpha_i)^2\}+
    \min\{\beta_i^2 ,(d_i -\beta_i)^2\})
    \right\},
\end{equation}
the spectral gap $\Delta$ of the code CFT and $d'(\mathcal{C})$ is related:
\begin{equation}
    \Delta=\frac{1}{2}\min\{d'(\mathcal{C}),\min_{i}\{d_i\}\}.
\end{equation}

\section{Discussion and Further Direction}\label{sec4}
So far, we have searched for physical or code-theoretical translations 
of simple current orbifold discrete torsions. In other words, a discrete 
torsion roughly reflects the B-field information using the coordinates of 
the Narain moduli, and if we consider it to be the B-form part of a B-form code, 
the partition function can be constructed from the weight enumerator of the code. 
Our next task is how to recover the spectral gap of the Narain RCFT from this discrete 
torsion. We expect that this can be accomplished with an appropriate extension of our 
previous paper. That is, we consider the discrete torsion as a generalization of the 
Boolean function. We are also interested in graph-theoretic constructions 
of such theories.

The case we discussed was the case where $\gamma^{\top}\gamma$ is a diagonal matrix. 
However, in general, this is not a diagonal matrix, but a real symmetric matrix. 
A real symmetric matrix can be diagonalized by an orthogonal matrix $P$. Nevertheless, 
if $P$ does not take integer values, this discussion cannot be applied as it is. 
(The case of integer-valued P, which corresponds to a permutation matrix, is not discussed.) 
The reason is that when $N$ and $M$ are transformed by $P$, 
the resulting elements are generally not integers. It would be an intriguing problem 
to extend the discussion to encompass this aspect as well. If this is resolved, the 
findings of this paper may be applicable to code CFT, as discussed by Angelions \cite{angelinos2022}. 
However, in that case, the code would be an isodual code instead of a self-dual code, 
necessitating further discussion.
Once this is done, the original goal of "identifying the combinatorial 
structure that should be present in the partition function of all rational Narain CFTs" 
is nearly achieved. This allows us to identify their holographic behavior by computing the sum 
of all rational Narain theories in a combinatorial way. Furthermore, 
rational CFT has been mathematically related to 3D TQFT, which is expected to 
lead to a mathematical clarification of the correspondence between ensemble average 
and quantum gravity.

We discussed a code CFT that can reproduce not only conventional codes but also the 
structure of codes in which each bit has a distinct dimension. Such a code can be 
formulated as a code over a ring. It is an interesting problem to perform averaging 
over codes with self duality, 
and we expect to obtain 
results that encompass the averaging that has been considered so far. Another work on 
this code CFT is \cite{kawabata2023,henriksson2022,dymarsky2021a}, 
and it would be interesting to apply it to these applications. There are 
other interesting works \cite{ashwinkumar2021,henriksson2023} where the notion 
of Narain CFTs is extended.


In addition, the discrete torsion that has appeared in this article is mathematically 
formulated as Kreuzer Schellekens Bihomomorphism (KSB). In RCFT terminology, 
this is a map that gives a gluing together of a chiral part and an 
antichiral part, which is classified using the cohomology of the group. 
In a previous paper, the author introduced the Boolean function f to write the 
spectral gap $\Delta$ in terms of f and discussed the consideration of f where $\Delta$ is large. 
Since this KSB can be regarded as a generalization of the Boolean function, 
we intend to set up a future work to see if in the general RCFT $\Delta$ can be written using KSB.
\section*{Acknowledgements}
We thank Toshiya Kawai, 
Anatoly Dymarsky, and Masahito Yamazaki for discussions and comments on our paper. This work is 
supported by JST the establishment of university fellowship towards the 
creation of science technology innovation grant No.JPMJF2123 and JSPS KAKENHI grant No.23KJ1183.

\newpage
\bibliographystyle{jalpha}
\bibliography{Furuta202304}
\end{document}